\documentclass[letterpaper, 10 pt, conference]{ieeeconf}  

\IEEEoverridecommandlockouts                              

\overrideIEEEmargins

\usepackage{amsthm}
\usepackage{bbm}
\usepackage{lipsum}
\usepackage{comment}
\usepackage{amsfonts}
\usepackage{amsmath}

\usepackage{bm}
\usepackage[normalem]{ulem}
\usepackage{graphicx}
\usepackage{romannum}
\usepackage{amssymb}
\usepackage{enumitem}
\usepackage{xcolor}
\usepackage{cite}

\newcommand{\jb}[1]{\textcolor{teal}{[James: #1]}}

\newcommand{\terminalCost}[1]{\bar{l}^{#1}}

\newcommand{\beliefi}[2]{b^{#1}_{#2}}

\newcommand{\x}[1]{\text{\normalfont{\Romannum{#1}}}}

\newcommand{\belief}[1]{b_{#1}}
\newcommand{\game}[1]{\mathcal{G}^{#1}}
\newcommand{\e}{a}
\newcommand{\dprim}[1]{d(#1,g^1)}

\newcommand{\EAC}[1]{R^{#1}_{\sigma^*}}

\newcommand{\stagecost}[2]{l^{#1}_{#2}}
\newtheorem{theorem}{Theorem}
\newtheorem{lem}{Lemma}
\newtheorem{definition}{Definition}

\newtheorem{rem}{Remark}

\usepackage{import}
\newcommand{\Mover}{Attacker}
\newcommand{\Eater}{Defender}
\newcommand{\olstrategy}{{\gamma_\text{mix}}}
\newcommand{\olstrategyset}{{\Gamma_\text{mix}}}

\DeclareMathOperator*{\argmin}{arg\,min}

\newenvironment{sketch}{\par\noindent\textit{Sketch of proof.}\ }{\hfill$\square$\par}



\usepackage{balance} 

\title{\LARGE \bf
Deceptive Path Planning: A Bayesian Game Approach
}

\author{Violetta Rostobaya$^1$, James Berneburg$^2$, Yue Guan$^2$, Michael Dorothy$^3$ and Daigo Shishika$^1$
\thanks{We gratefully acknowledge the support of ARL grant ARL DCIST CRA W911NF-17-2-0181. 
The views expressed in this paper are those of the authors and do not reflect the official policy or position of the United States Army, Department of Defense, or the United States Government.}
\thanks{This work has been submitted to the IEEE for possible publication.
Copyright may be transferred without notice, after which this version may no longer be accessible.}
\thanks{$^{1}$Violetta Rostobaya, James Berneburg and Daigo Shishika are with the Department of Mechanical Engineering,
        George Mason University,
        {\tt\small \{vrostoba,jbernebu,dshishik\}@gmu.edu}}%
\thanks{$^{2}$ Yue Guan is with the School of Aerospace Engineering at Georgia Institute of Technology,
        {\tt\small yguan44@gatech.edu}}%
\thanks{$^3$Michael Dorothy is with DEVCOM Army Research Laboratory, {\tt\footnotesize michael.r.dorothy.civ@army.mil}
}
}
\begin{document}

\makeatletter

\makeatother
\maketitle
\thispagestyle{empty}
\pagestyle{empty}

\begin{abstract}
This paper investigates how an autonomous agent can transmit information through its motion in an adversarial setting. 
We consider scenarios where an agent must reach its goal while deceiving an intelligent observer about its destination.
We model this interaction as a dynamic Bayesian game between a mobile Attacker with a privately known goal and a Defender who infers the Attacker’s intent to allocate defensive resources effectively. 
We use Perfect Bayesian Nash Equilibrium (PBNE) as our solution concept and propose a computationally efficient approach to find it. 
In the resulting equilibrium, the Defender employs a simple Markovian strategy, while the Attacker strategically balances deception and goal efficiency by stochastically mixing shortest and non-shortest paths to manipulate the Defender’s beliefs.
Numerical experiments demonstrate the advantages of our PBNE-based strategies over existing methods based on one-sided optimization.
\end{abstract}

\section{Introduction}
\noindent
\noindent
Deception---an act of concealing private information or conveying false information to gain an advantage---plays a pivotal role in strategic interactions across domains.
While much of the existing research has focused on deception through signaling in cyber domain~\cite{Huang2018DynamicBG, Kadem2024,Farhang2014ADB,YAVIN1987191,Basar2018}, this work investigates deception through motion~\cite{Dragan2015DeceptiveRM,shishika2024deception, rostobaya2023deception}.
Motion-based deception is distinct in that it introduces a fundamental tradeoff between goal-directed efficiency and the level of deceptiveness.
To resolve this tradeoff, this paper quantifies the effectiveness of motion-based deception against an intelligent observer through a game-theoretic lens.

We consider the scenario shown in Fig.~\ref{fig:scenario}, where a mobile agent, the Attacker, seeks to reach its true goal and carry out an attack, while its opponent, the Defender, attempts to infer the Attacker’s intentions and allocate defensive resources accordingly. 
The Attacker may employ deceptive movements such as approaching a fake goal to 
induce a misallocation of defenses, while the Defender plans its response with the possibility of deception in mind.
We model this interaction as a dynamic Bayesian game, where both players act strategically under asymmetric information.
\begin{figure}[t!]
    \centering
    \includegraphics[width = 0.5\textwidth]{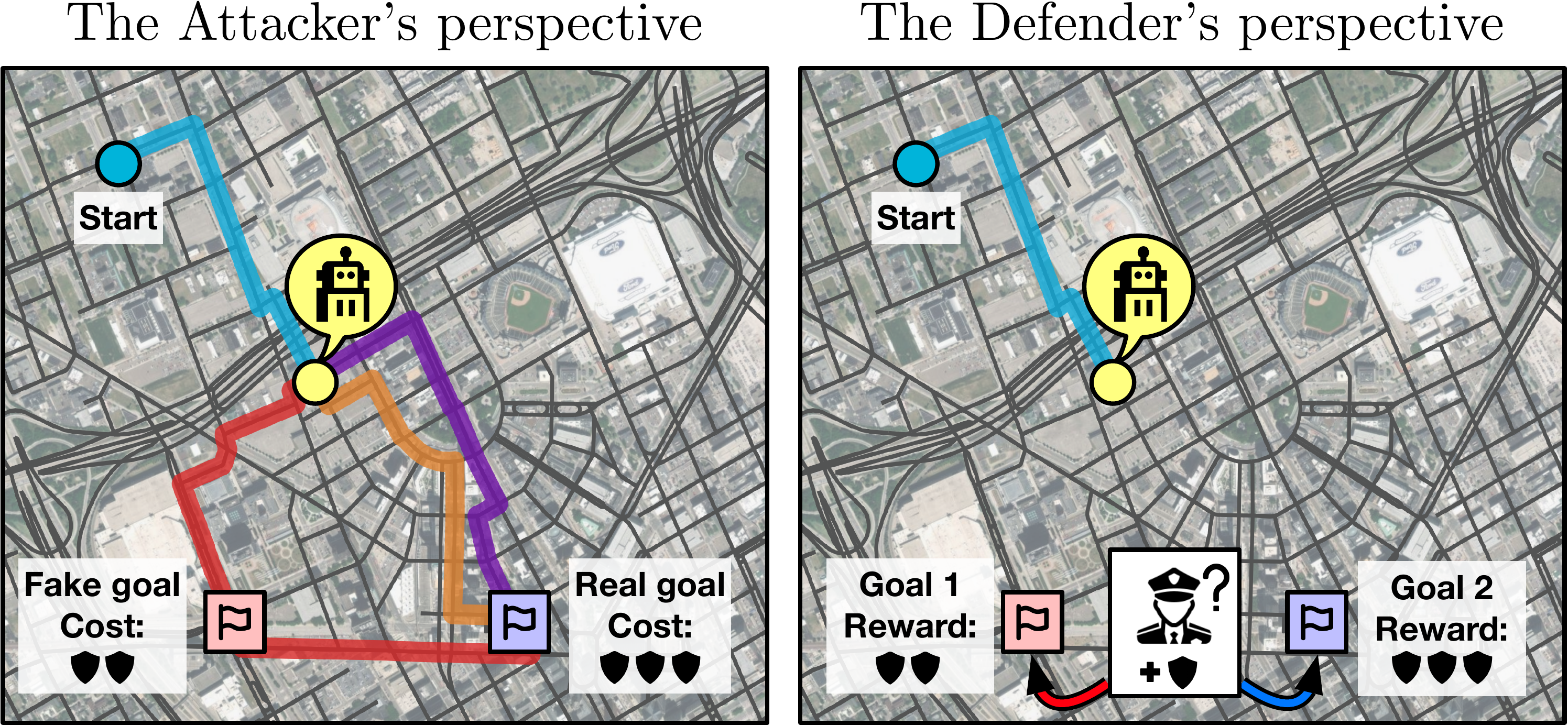}
    
    \caption{The \Mover{} at yellow node plans its path towards the true (blue) goal. Meanwhile, the \Eater{} infers the \Mover{}'s intentions and allocates defenses to one of two goals at each time step.  
    The accumulated defensive resources are indicated at each goal.}
    \vspace{-15 pt}\label{fig:scenario}
\end{figure}

The contributions of this work are: (i) the formulation of deceptive path planning as a dynamic Bayesian game; (ii)~a computationally efficient approach for obtaining a Perfect Bayesian Nash Equilibrium (PBNE) for this game; (iii) a comparative study demonstrating the effectiveness of the proposed game-theoretic strategies relative to existing deceptive path planning methods. Our results also demonstrate that both the appropriate level of deceptiveness and its efficacy are  dependent on the environment and the agent’s state.

\section{Related Work}
\noindent 
Goal-recognition scenarios have been widely used to study deceptive path planning and intent inference.
Prior works on intent inference assume that the mobile agent moves optimally towards its goal~\cite{Ramrez2010ProbabilisticPR, ziebart2008maximum,Pereira2017LandmarkBasedHF}.
These models are applied to deception~\cite{Dragan2015DeceptiveRM,Masters2017DeceptiveP,price2023domain}, typically framing it as a one-sided optimization where the deceiver knows the observer's inference model.
When this assumption is relaxed, critical questions arise: How should the agent plan its trajectory without access to the observer's inference model? Conversely, how should the observer interpret the agent's motion when the agent may act deceptively?
%
Moreover, the deceiver's objective is typically defined based on the observer's beliefs~\cite{Dragan2015DeceptiveRM,Masters2017DeceptiveP}, without accounting for how these beliefs influence the observer's actions.
Consequently, the appropriate level of deception for a given mission remains unclear.
Our work addresses these gaps using a game-theoretic framework.

Game theory has been previously used to study deception, 
but mainly in cybersecurity contexts~\cite{Huang2018DynamicBG, Kadem2024,Farhang2014ADB}, 
where false signals can be broadcast at low cost and without immediate physical consequences.
In contrast, our setting involves physically embodied agents whose actions are constrained by dynamics, energy, and environment. 
These physical constraints impose hard limits on the extent and nature of deception, requiring agents to balance deceptive behavior with goal-directed motion.
Our formulation explicitly captures this trade-off, distinguishing it from prior work in cyber domains.
The game formulated in~\cite{rostobaya2023deception} is closely related to ours, as it models a scenario similar to the one shown in Fig.~\ref{fig:scenario}. 
However,~\cite{rostobaya2023deception} formulates a non-zero-sum game in which the observing agent optimizes for worst-case payoff. 
In contrast, our formulation models the scenario as a zero-sum Bayesian game, where players optimize for expected payoff based on their beliefs. 
Additionally, while~\cite{rostobaya2023deception} limits the analysis to an obstacle-free grid world, our work generalizes the setting to arbitrary graph environments.


\if{false}
A prominent equilibrium concept for Bayesian games is the Perfect Bayesian Nash Equilibrium (PBNE), which strengthens the standard Bayesian Nash Equilibrium by incorporating consistent beliefs and sequential rationality. 
Existing methods for computing a PBNE include forward-backward iterative techniques~\cite{vasal2018systematic,aBHINAV2016}, sequential game decomposition~\cite{ouyang2016dynamic}, and normal form representations~\cite{Harsanyi1982}. 
However, in our problem the game ends when the deceiver arrives at its goal, which means the game can continue for arbitrarily long time, making set of possible game histories to be infinite. The complexity of our problem renders the mentioned conventional methods computationally intractable. Therefore, we propose a novel approach that leverages the unique structure of our formulation to compute a PBNE without resorting to iterative methods.
\fi{}

For Bayesian games, 
the PBNE 
is a standard solution concept.
Existing methods for finding PBNE include forward-backward iterative techniques~\cite{vasal2018systematic,aBHINAV2016}, sequential game decomposition~\cite{ouyang2016dynamic}, and normal form representations~\cite{Harsanyi1982}. 
However, these methods suffer from curse of dimensionality in games with large state spaces or extended horizons. 
To address this challenge, we propose a novel solution technique that leverages the structure of our problem to efficiently compute a PBNE without relying on iterative methods.


\section{Preliminaries}

\subsection{Problem Formulation}
\noindent
Consider a discrete-time dynamic game between the \Mover{} and the \Eater{} in a graph environment with two candidate goals. 
At the start of the game, Nature selects the true goal and privately reveals it to the Attacker, who is required to eventually reach it.
At each time step, the Defender receives a limited amount of defensive resources to allocate between the two goals, without knowing which one is the true goal.
These allocations are irreversible once made.
Consequently, the \Mover{} is incentivized to reach the goal efficiently, while also misleading the \Eater{} into selecting the fake goal, thereby minimizing the allocation of the resources to the true goal.

The environment is represented by a graph $(\mathcal{V},\mathcal{E})$, where $\mathcal{V}$ is a finite set of vertices and $\mathcal{E} \subset  \mathcal{V} \times \mathcal{V}$ is a set of edges. 
The edge weight $w(e)\in\mathbb{R}_{>0}$, for $e \in \mathcal{E}$  indicates the time needed for the Attacker to traverse an edge and the amount of resources the \Eater{} can allocate to a goal during this time of traversal.
 We let $g^1,g^2\in \mathcal{V}$ be the two candidate goals on the graph and assume that both are reachable from every node to avoid degeneracy.
We represent the private information---the true goal---as the Attacker's \emph{type} $\theta$,
modeled as a realization of a discrete random variable with finite support $\mathcal{I}:=\{1,2\}$ and a commonly known prior probability distribution $\beliefi{\theta}{0}\in (0,1)$.\footnote{Time indices are denoted in the subscript with $t$ or $k$, whereas the goal indices are denoted as superscripts with $i$ or $j$.} 
We denote the true goal of the \Mover{} type $\theta$ by $g^\theta$ and the fake goal by $g^{-\theta}$.


\paragraph*{States  and actions}
The game state $s_t\in \mathcal{V}$  represents the position of the \Mover{} at time $t$ and is controlled solely by the \Mover{}.
At each time step $t$, the \Mover{} acts first and transitions from $s_{t-1}$ to some node $s_t$ by selecting an action $e_t=(s_{t-1},s_t)$ 
from its action space $\mathcal{A}^A(s_{t-1})$ denoting the set of all outward edges from $s_{t-1}$.
After observing the Attacker's movement $e_t$, the \Eater{} selects one goal to allocate resources to.
The \Eater{}'s action $\e_t$ is thus drawn from $\mathcal{A}^D =  \{g^1,g^2\}$. 
Once resources are committed to the goal selected by the Defender, they cannot be reallocated to another goal later.

\paragraph*{Terminal condition} 
The game terminates when the \Mover{} reaches $g^\theta$ (and the \Eater{} takes its last action).
Note that if the \Mover{} reaches and then leaves the fake goal, the \Eater{} realizes the true goal, and the game reduces to a complete information game with a trivial solution (see Remark~\ref{rem. Complete information game.}).
Therefore, without loss of generality, we let the game terminate when the \Mover{} reaches either goal.
Formally, the game terminates at $T= \min_{t \in \mathbb{Z}_{\geq 0}} \{t \text{ }|\text{ } s_t \in \{g^1,g^2\}\}$, which is a random variable that depends solely on the strategy used by the \Mover{}. 

\begin{rem}[Complete Information Game]\label{rem. Complete information game.}
    When the \Eater{} knows \Mover{}'s type $\theta$, the optimal \Eater{} strategy is to allocate to $g^\theta$ at every time step, and the one for the \Mover{} is to move towards $g^\theta$ on a shortest path. The value of this complete-information game is 
    \begin{equation}\label{eq: comp.info.value}
        \bar U^\theta(s_0) = d(s_0,g^\theta),
    \end{equation}
    which is simply the distance to the true goal. 
\end{rem}


\paragraph*{Information structure} 
The common knowledge includes the graph, goal locations $\{g^1,g^2\}$, and the prior $\belief{0}=[b^1_0,b^2_0]$, whereas the \Mover{}'s type $\theta$ is a private information.
Both players have perfect recall of the state and action history $h_{t}=(s_0,e_1,s_1,\e_1,...,e_{t-1},s_{t-1},\e_{t-1})$, where the initial history is $h_1=(s_0)$.
We denote the set of admissible histories as $\mathcal{H}_t$ at each time step.
Since the \Mover{} acts first, the \Eater{} also has access to $e_t$ and the updated state $s_t$. 

\paragraph*{Strategy sets}

We let $\Sigma$ be the set of admissible behavioral strategies of the \Eater{}, and $\sigma\in\Sigma$ to be a specific \Eater{} strategy.
The \Eater{}'s policy at time $t$ is given by a mapping $\sigma_t: \mathcal{H}_t \times \mathcal{E} \times \mathcal{V} \rightarrow \Delta(\mathcal{A}^D(\cdot))$; here $\Delta(\mathcal{A})$ denotes an $|\mathcal{A}|$-dimensional simplex.
Policy $\sigma_t(g^i| h_{t},e_t,s_t)$ for $i\in\{1,2\}$ corresponds to the probability that the \Eater{} allocates to $g^i$.
Similarly, we use $\Gamma$ to denote the set of \Mover{}'s admissible behavioral strategies, and $\gamma\in\Gamma$ to be its element, which provides the \Mover{}'s policy at time $t$ given by a mapping $\gamma_t: \mathcal{H}_t \times \mathcal{I} \rightarrow \Delta(\mathcal{A}^A(\cdot))$.
The policy $\gamma_t(e_t|h_{t}, g^\theta)$ denotes the probability that the \Mover{} type $\theta$ takes edge $e_t$, given the history $h_t$.
To avoid degenerate cases, we further restrict $\Gamma$ such that $ \mathbb{P}^{\gamma}(T < \infty)=1$ for all $\gamma\in\Gamma$; i.e., all admissible \Mover{} strategies terminate the game in finite time.

\paragraph*{Beliefs}
We let $\beliefi{i}{t} \triangleq \mathbb{P}^{\gamma}(g^i|h_{t},e_t,s_t)$ be the \Eater{}'s belief that $g^i$ is the true goal at $t$, given its observed history ($h_{t},e_t,s_t$). The belief vector at time $t$ is $\belief{t}=[\beliefi{1}{t},\beliefi{2}{t}] \in \Delta (\mathcal{I})$. For Bayesian games, beliefs are assumed to be  propagated via Bayes' rule~\cite{Harsanyi1982}:
\begin{equation} \label{Bayes}
\beliefi{i}{t}=\frac{\beliefi{i}{t-1} \gamma_t(e_t|h_{t},g^i)}{\sum_{j \in \{ 1,2\}} \beliefi{j}{t-1}\gamma_t(e_t|h_{t},g^j)},\forall t>0.
\end {equation}
\paragraph*{Objective function}
We consider a zero-sum game in which the \Mover{} minimizes the expected final amount of resources allocated to its true goal, while the \Eater{} maximizes it. 
We let $\stagecost{\theta}{t}(e_t,\e_{t})$ be the stage cost:
\begin{equation*}
   \stagecost{\theta}{t}(e_t,\e_t)= 
   \begin{cases}
       w(e_t),& \text{ if } \e_t=g^\theta,\\
       0,&  \text{ otherwise}.
   \end{cases}
\end{equation*}
In words, a cost of $w(e_t)$ is incurred if and only if the \Eater{} allocates to the true goal $g^\theta$.
Additionally, we define a terminal cost function to handle cases where the game ends because the \Mover{} has visited the fake goal: 
\begin{equation*}
    \terminalCost{\theta}(s_{T}) \triangleq d(s_{T},g^\theta), 
\end{equation*}
where $d(s_{T},g^\theta)$ is the weighted distance from the final node $s_{T}\in\{g^1,g^2\}$ to the true goal $g^\theta$. 
Note that this will be zero if the game ends due to the \Mover{} reaching the true goal $g^\theta$, or will be the distance from the fake goal $g^{-\theta}$ to the true goal $g^\theta$, which is the cost accrued after $g^\theta$ is revealed to be the true goal. 
After the fake goal $g^{-\theta}$ is visited, the \Eater{} immediately knows $g^\theta$ is true and thus deterministically allocates to it, while the \Mover{} moves on a shortest path to $g^\theta$.

We use $U^\theta(\sigma,\gamma;s_0,\belief{0})$ to denote the type-specific cost representing the expected total allocation to $g^\theta$ at the end of the game:
\begin{equation}
\begin{split}
U^\theta(\sigma,\gamma;s_0,\belief{0})=\mathbb{E}^{\sigma,\gamma} \biggl [   \sum^{t=T}_{t=1} \stagecost{\theta}{t}(e_t,a_t) + \terminalCost{\theta}(s_{T})\biggl ] \\
    =\mathbb{E}^{\gamma} \biggl [ \sum^{t=T}_{t=1} w(e_t)\sigma_t(g^\theta|h_{t},e_t,s_t) + \terminalCost{\theta}(s_{T})\biggl ],
\end{split}
\end{equation}
where $\gamma$ determines $s_t$, $e_t$ and $T$.
The overall payoff is defined as follows: 
\begin{equation}
    U(\sigma,\gamma;s_0,\belief{0})=\sum_{i\in \mathcal{I}} \beliefi{i}{0} U^i(\sigma,\gamma;s_0,\belief{0} ), 
\end{equation}
which the \Eater{} aims to maximize and the \Mover{} aims to minimize.

\subsection{Perfect Bayesian Nash Equilibrium }
\noindent
The strategy profile $(\sigma^{*},\gamma^{*})$ constitutes a {Bayesian} Nash Equilibrium if it satisfies:
\begin{subequations}\label{eq:equilibrium} 
\begin{align}
    U(\sigma^{*},\gamma^{*};s_0,\belief{0})&\geq U(\sigma,\gamma^{*};s_0,\belief{0}),&\forall \sigma\in\Sigma, \label{Eq. BNE Eater} \\
U(\sigma^{*},\gamma^{*};s_0,\belief{0})&\leq U(\sigma^{*},\gamma;s_0,\belief{0}),&\forall \gamma\in\Gamma. \label{Eq. BNE Mover}
    \end{align}
\end{subequations}
A refinement that explicitly accounts for the dynamic aspect of the game is called PBNE 
defined below for our asymmetric information game.
\begin{definition}\label{def:PBNE}
Strategy profile $(\sigma^{*},\gamma^{*})$ constitutes a weak PBNE if, in addition to 
\eqref{Eq. BNE Mover}, it satisfies the following~\cite{Harsanyi1982}: 
    \begin{itemize}
        \item \textbf{Sequential rationality}: \Eater{}'s strategy $\sigma^*$ maximizes its expected payoff conditioned on its belief $b_t$.
        \item \textbf{Belief consistency}: Beliefs $\belief{t}$ are propagated using Bayes' rule in~\eqref{Bayes} with $\gamma^*$. Bayes' rule is applied wherever possible; otherwise, the beliefs remain unspecified.
    \end{itemize}
Note that the above two conditions subsume \eqref{Eq. BNE Eater}.
\end{definition}

In principle, our game can be represented in extensive form (i.e., as a game tree) and solved for PBNE using a backward-forward recursive method \cite{vasal2018systematic,aBHINAV2016,ouyang2016dynamic}.
 However, in our formulation, the \Mover{} can traverse the graph for an arbitrarily long time before reaching its true goal, leading to an infinitely large history set. 
 Even if we restrict Attacker's strategy set to strategies that reach the true goal in finite time, the resulting game tree remains prohibitively large---particularly in large graph environments---making direct computation infeasible. 
As a result, the direct use of existing PBNE solvers significantly limits the scalability (i.e., the size of the graph) of our deceptive path planning game.
\begin{figure}[t]
    \centering    \includegraphics[width = 0.45\textwidth]{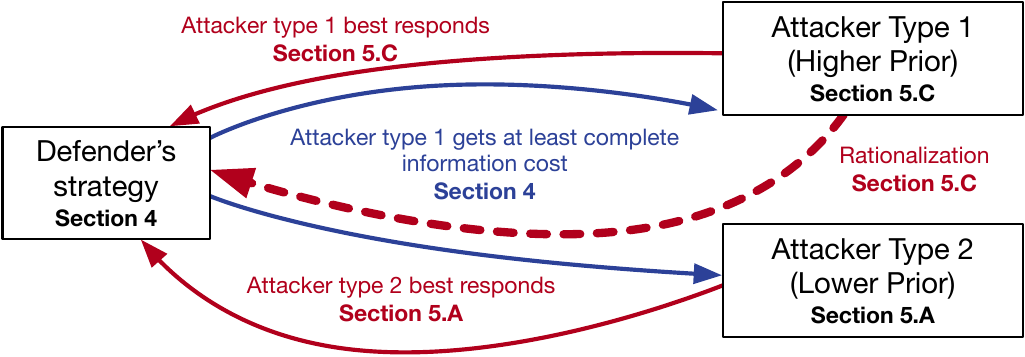}
    \caption{Overarching idea for how PBNE is established.}
    \vspace{-15pt}
\label{fig:idea}
\end{figure}

\subsection{Proposed Approach in Finding PBNE.}
\noindent
Figure~\ref{fig:idea} offers a high-level overview of our proposed approach. 
We begin with a simple \Eater{} strategy $\sigma^*$ that provides a lower bound on the expected cost for the \Mover{} type with the higher prior. 
We then construct the \Mover{}'s strategy $\gamma^*$ as the best response against $\sigma^*$.
However, this best response by itself does not provide an equilibrium since the \Eater{} may exploit it by deviating from $\sigma^*$.
Thus we additionally require the \Mover{}'s strategy to \textit{rationalize} the \Eater{}'s strategy, i.e., induce beliefs that make actions given by $\sigma^*$ to be sequentially rational. 
This constructive approach enables us to find a PBNE efficiently.


\section{The \Eater{}'s Strategy}

\noindent
This section presents \Eater{}'s strategy $\sigma^*$, whose optimality is proved later in Theorem~\ref{Th: PBNE} in Section~\ref{sec. Attacker Startegy}.
Before stating  $\sigma^*$, we define several auxiliary concepts.

\begin{definition}[Primary/Secondary Types]

We call the type with higher prior \emph{primary} and the other \emph{secondary}.\footnote{When the prior is $\beliefi{1}{0}=\beliefi{2}{0}=0.5$, the selection of the primary/secondary goal becomes arbitrary. Our results on PBNE still hold for each case.}
Without loss of generality, we let index $i=1$ (resp.~$i=2$) correspond to the primary (resp.~secondary): i.e., $\beliefi{1}{0}\geq\beliefi{2}{0}$.
\end{definition}

The Defender strategy we propose in the next section monitors the proximity of the Attacker to the primary goal, $g^1$, using the \emph{progress function} defined below.
\begin{definition}[Progress function] \label{def. c(t)}
Let $s_\tau$ be the \Mover{}'s position at  time $\tau$. We define $c_t$ as the minimum distance to the primary goal $g^1$ the \Mover{} has reached by $t$:
\begin{equation}
    c_t=\min_{\tau \leq t} d(s_\tau,g^1).
\end{equation}
\end{definition}

\noindent
Note that full state history is not required to update $c_t$. In practice, we can implement the following update rule:
\begin{align*}
    c_{t}&=\min \{c_{t-1},\dprim{s_{t}} \}, ~~\forall t>0,
\end{align*}
with the initial value $c_0=\dprim{s_0}$.


\noindent

We propose a Defender strategy that depends only on the \Mover{}'s most recent action $e_t$, updated state $s_t$ and progress function from the previous time step $c_{t-1}$.
By augmenting the state space with this progress function, such a strategy can be regarded as memoryless (Markovian).

\textbf{Defender's Strategy $\sigma^*$:}
The probability of allocating to $g^1$ is proportional to the progress towards $g^1$ that the \Mover{} has just made: i.e.,
\begin{equation} \label{eq: pi^E_A}
     \sigma_t^*(g^1| h_{t},e_t,s_t)=\sigma^*_t(g^1|c_{t-1},e_t)=\frac{c_{t-1}-c_t}{w(e_t)}\in[0,1]
\end{equation}
and $\sigma_t^*(g^2|c_{t-1},e_t)=1-\sigma_t^*(g^1|c_{t-1},e_t)$.\footnote{Note that $c_{t-1}$ is a sufficient statistic to implement $\sigma_t^*(g^i| h_{t},e_t,s_t)$ due to the Markovian property of $\sigma^*$. Also note that $\sigma^*$ is time invariant, and therefore we will drop the time subscript when appropriate.} 

The quantity $c_{t-1} - c_t\in[0,w(e_t)]$ measures the \emph{new} progress achieved towards $g^1$ at time step $t$.
Notice that even if $\dprim{s_t}<\dprim{s_{t-1}}$, the \emph{new} progress may be $c_t-c_{t-1}=0$, if the closest distance to $g^1$ was achieved earlier: i.e., $\dprim{s_\tau}\leq\dprim{s_t}$ for some $\tau<t-1$.
Also note that $c_{t-1} - c_t > 0$
if and only if $\dprim{s_\tau}<\dprim{s_t}$ for all $\tau<t$, i.e. $s_t$ is the closest node to $g^1$ among all previously visited nodes.
Furthermore, $c_{t-1} - c_t = w(e_t)$ if and only if the following two conditions hold: (i) $s_{t-1}$ was the closest point to $g^1$ among $\{s_\tau\}_{\tau=0}^{t-1}$, and (ii) $s_t$ is on a shortest path from $s_{t-1}$ to $g^1$.

%


Figure~\ref{Fig. Defender strategy}(a) shows an example of how $\sigma^*$ reacts to different actions by the \Mover{}.
As demonstrated in Fig.~\ref{Fig. Defender strategy}(b), this strategy ensures that the primary \Mover{} incurs a penalty precisely equal to the complete information game cost (see Remark~\ref{rem. Complete information game.}), which is formally stated in the next Lemma.
\begin{figure}[h!]
    \centering    \includegraphics[width = 0.49\textwidth]{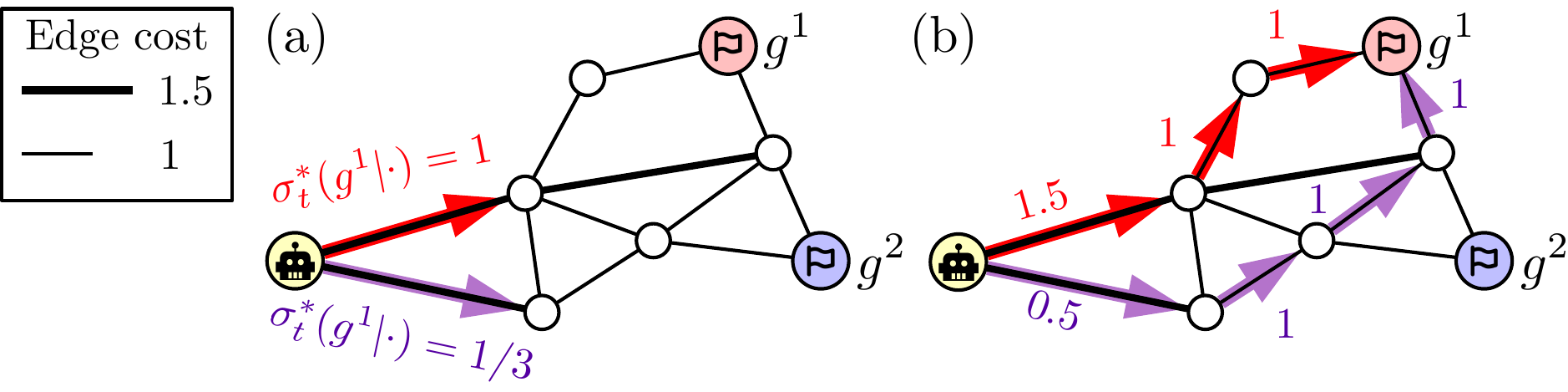}\label{Fig. Defender strategy}
\caption{Illustration of the Defender's strategy. (a)~The reaction of $\sigma^*$ to two different Attacker actions. The progress toward $g^1$ (red flag) is $c_0-c_1=3.5-2=1.5$ for the red action, whereas $c_0-c_1=3.5-3=0.5$ for the purple one. (b)~The expected penalty $w(e_t)\cdot\sigma^*(g^1|\cdot)$ along two different paths (red and purple) to $g^1$ demonstrating that both paths lead to the total cost of $U^1=d(s_0,g^1)=3.5$ in expectation.}
\end{figure}



\noindent 

\begin{lem}[Lower bound on $U^1$]\label{Lem: mover payoff in game A}
If the \Eater{} uses $\sigma^*$, then for any admissible $\gamma$, the primary \Mover{}'s cost is bounded by the initial distance to $g^1$: i.e.,
\begin{equation}\label{eq. Primary Game outcome}
    U^1(\sigma^*,\gamma;s_0,\belief{0})\geq \dprim{s_0}, \forall \gamma \in \Gamma,
\end{equation}
where the equality holds if $\mathbb{P}^{\gamma}(s_T=g^1|g^\theta=g^1)=1$.
\end{lem}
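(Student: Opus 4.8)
The plan is to exploit the fact that, against $\sigma^*$, the expected per-step penalty paid by the primary \Mover{} is exactly the \emph{new progress} it makes toward $g^1$, which makes the running total telescope. Concretely, I would substitute $\sigma^*$ from~\eqref{eq: pi^E_A} into the stage cost for the case $\theta=1$, obtaining $w(e_t)\,\sigma^*_t(g^1|c_{t-1},e_t)=c_{t-1}-c_t$. The expected cumulative stage cost along any realized trajectory then becomes $\sum_{t=1}^{T}(c_{t-1}-c_t)=c_0-c_T$, a telescoping sum. Since $c_0=\dprim{s_0}$ by Definition~\ref{def. c(t)} and $T<\infty$ almost surely for admissible $\gamma$, this sum is finite and well defined on every sample path.

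Next I would assemble the full type-$1$ cost by appending the terminal term $\terminalCost{1}(s_T)=d(s_T,g^1)$. This yields
\begin{equation*}
U^1(\sigma^*,\gamma;s_0,\belief{0})=\mathbb{E}^{\gamma}\!\left[(c_0-c_T)+d(s_T,g^1)\right]=\dprim{s_0}+\mathbb{E}^{\gamma}\!\left[d(s_T,g^1)-c_T\right],
\end{equation*}
where I have pulled the deterministic constant $c_0=\dprim{s_0}$ out of the expectation.

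The key inequality then follows directly from the definition of the progress function: since $c_T=\min_{\tau\le T}d(s_\tau,g^1)\le d(s_T,g^1)$, the integrand $d(s_T,g^1)-c_T$ is nonnegative on every sample path, so its expectation is nonnegative and the bound $U^1\ge\dprim{s_0}$ follows. For the equality claim I would note that tightness requires $d(s_T,g^1)=c_T$ almost surely; under the hypothesis $\mathbb{P}^{\gamma}(s_T=g^1\mid g^\theta=g^1)=1$ we have $d(s_T,g^1)=0$ a.s., which forces $c_T=0$ (because $0\le c_T\le d(s_T,g^1)=0$), delivering equality.

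I do not expect a serious obstacle: the entire argument hinges on recognizing the telescoping structure, after which the reasoning is pathwise and monotone. The only points deserving mild care are (i) confirming the remaining expectation is over $\gamma$ alone---the \Eater{}'s randomization has already been integrated out in the closed form of $U^\theta$, so only the \Mover{}'s trajectory randomness survives---and (ii) justifying the finiteness and validity of the telescoping sum, which is guaranteed by the admissibility restriction $\mathbb{P}^{\gamma}(T<\infty)=1$ that every $\gamma\in\Gamma$ satisfies.
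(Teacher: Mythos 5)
Your proof is correct. Note that the paper itself omits the proof of this lemma (it states only that ``full details are available upon request''), so there is no written argument to compare against; your telescoping argument is evidently the intended one, since the whole point of the progress-proportional construction of $\sigma^*$ in~\eqref{eq: pi^E_A} is that the primary type's expected stage cost $w(e_t)\,\sigma^*_t(g^1|c_{t-1},e_t)=c_{t-1}-c_t$ collapses to $c_0-c_T$ pathwise (this is exactly what the paper illustrates numerically in Fig.~3(b)). Your two points of care are also handled correctly: the Defender's randomization is indeed already integrated out in the paper's second expression for $U^\theta$, and admissibility ($\mathbb{P}^\gamma(T<\infty)=1$) together with boundedness of $c_T\in[0,c_0]$ and of the terminal cost makes the telescoping sum and its expectation well defined. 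The equality case is clean: $s_T=g^1$ a.s.\ gives $d(s_T,g^1)=0$, and $0\le c_T\le d(s_T,g^1)$ forces $c_T=0$, so the slack term $\mathbb{E}^\gamma[d(s_T,g^1)-c_T]$ vanishes.
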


\begin{proof}
    We omit the proofs of lemmas that are not difficult to establish; full details are available upon request.
\end{proof}

The equality in Lemma~\ref{Lem: mover payoff in game A} comes from a ``hedging'' behavior of $\sigma^*$: i.e., the \Eater{} allocates to the less likely goal, $g^2$, whenever the \Mover{} is not making new progress towards $g^1$. 
 Consequently a wide variety of the \Mover{}'s strategies achieve the same cost $\bar U^1$ by avoiding $g^2$.
This ``slackness'' in the primary \Mover{}'s best response enables it to mix between actions and rationalize the \Eater{}'s actions, preventing its strategy from being exploited, as we will show in Section~\ref{sec. Attacker Startegy}.

\section{The \Mover{}'s Strategy}\label{sec. Attacker Startegy}

\begin{figure}[tb]
    \centering
    \includegraphics[width = 0.48\textwidth]{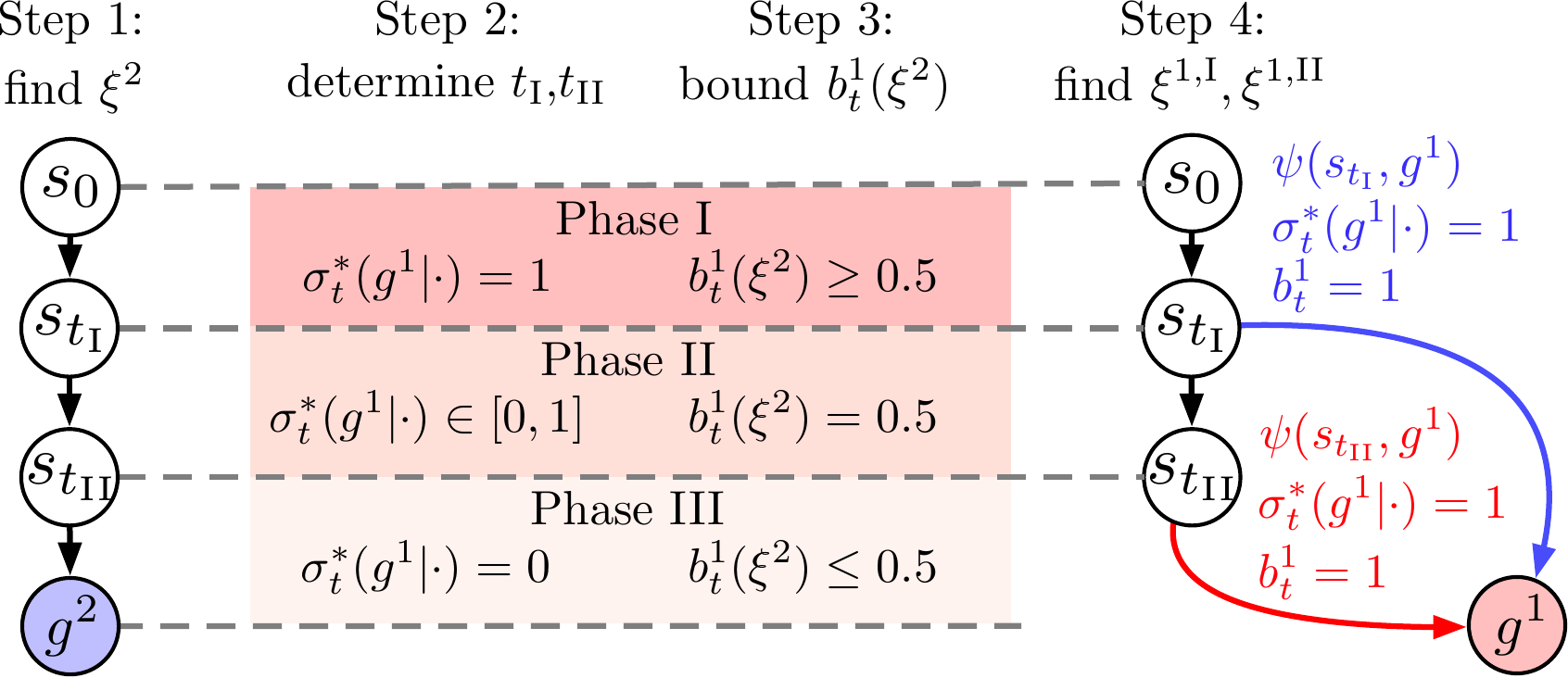}
\caption{Steps for finding \Mover{}'s strategy $\gamma^*$.}
    \label{fig:Mover's policy}
\end{figure}
\noindent
Figure~\ref{fig:Mover's policy} illustrates the construction of the Attacker strategy.
We first identify a path for the secondary \Mover{}, $\xi^2$, which is a best response path to $\sigma^*$ (Step~1). 
Based on how $\sigma^*$ acts along $\xi^2$, we segment $\xi^2$ into three phases (Step~2). 
We assign to each phase a belief that ``rationalizes'' the actions generated by $\sigma^*$ (Step~3). 
These beliefs are induced by constructing a primary \Mover{} strategy that makes probabilistic choices between continuing along $\xi^2$ or branching off from it (Step~4).

\subsection{Secondary \Mover{}'s strategy}\label{subsec:secondary}
\noindent
The secondary \Mover{}'s strategy is simply a best response to the \Eater{}'s strategy $\sigma^*$.
Although the space of the \Mover{}'s admissible strategies includes feedback/behavioral strategies, a best response to $\sigma^*$ can be found from a set of (open-loop) \emph{trajectories} to the goals. 
This is because the memoryless property of $\sigma^*$ allows the expected allocation to be fully determined for any segment,~$\varphi_{[k:K]}=(s_k,s_{k+1},...s_K)$, given the progress function at its start,~$c_{k}$.  
This expected allocation of resources under $\sigma^*$ over segment $\varphi_{[k:K]}$ is $\sum^{t=K}_{t=k+1}w(e_t^\varphi) \sigma_t^*(g^i|c_{t-1},e_{t}^\varphi)$, where $e_{t}^\varphi\triangleq\varphi_{[t-1:t]}$, and $c_t$ is updated along $\varphi$.
Furthermore, the following result allows us to reduce the search space from an infinite set of \emph{walks} (with loops) to a finite set of \emph{paths}, $\Xi$, that do not visit any node twice. 
\begin{lem}\label{lem: sec mover must use path}
If the \Eater{} uses $\sigma^*$, then the secondary \Mover{} has no incentive to visit any node more than once.
\end{lem}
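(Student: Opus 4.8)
The plan is to convert the secondary \Mover{}'s expected allocation into a closed-form functional of the traversed walk and then argue that this functional is minimized by a loop-free path. As already noted before the lemma, the memoryless property of $\sigma^*$ makes the expected allocation over any segment depend only on the segment and the entering value of $c$; consequently a best response can be sought among deterministic walks, and it suffices to show that every walk is weakly dominated by a simple path. For a walk $\phi$ ending at a goal $s_T$, substituting $\sigma^*_t(g^2|c_{t-1},e_t)=1-\frac{c_{t-1}-c_t}{w(e_t)}$ into the stage costs and telescoping $\sum_t (c_{t-1}-c_t)=c_0-c_T$ yields
\[
U^2(\sigma^*,\phi)=W(\phi)-\big(c_0-c_T\big)+\terminalCost{2}(s_T)=W(\phi)+c_T+\dsec{s_T}-\dprim{s_0},
\]
where $W(\phi)$ denotes the total edge weight of $\phi$ and $c_T=\min_{\tau\le T}\dprim{s_\tau}$ is the closest approach to the primary goal. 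Since $\dprim{s_0}$ is fixed, the walk enters the cost only through its length $W(\phi)$ and its closest approach $c_T$; in particular, any portion that neither shortens the walk nor improves the closest approach is wasteful.

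Next I would obtain a lower bound on this cost. Let $v^\star$ be a node of $\phi$ attaining the closest approach, so $c_T=\dprim{v^\star}$; since $\phi$ passes through $v^\star$, $W(\phi)\ge d(s_0,v^\star)+d(v^\star,s_T)$. When $s_T=g^2$ this gives $U^2(\sigma^*,\phi)+\dprim{s_0}\ge d(s_0,v^\star)+\dsec{v^\star}+\dprim{v^\star}=:f(v^\star)$, and the terminal case $s_T=g^1$ is exactly the instance $v^\star=g^1$ of the same bound. Hence $U^2(\sigma^*,\phi)\ge \min_{v}f(v)-\dprim{s_0}$ for every walk, where $f(v):=d(s_0,v)+\dprim{v}+\dsec{v}$.

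The crux is to realize this bound with a simple path, and this is where I expect the main difficulty, since naive loop deletion can erase the closest-approach node and thereby raise $c_T$. Let $v^\star$ minimize $f$, and form $\xi$ by concatenating a shortest $s_0$--$v^\star$ path $P_1$ with a shortest $v^\star$--$g^2$ path $P_2$; then $W(\xi)=d(s_0,v^\star)+\dsec{v^\star}$ and $c_T(\xi)\le\dprim{v^\star}$, so $\xi$ attains the bound. I would resolve loop-freeness directly: if some $u\neq v^\star$ lay on both $P_1$ and $P_2$, then $d(s_0,v^\star)=d(s_0,u)+d(u,v^\star)$ and $\dsec{v^\star}=d(v^\star,u)+\dsec{u}$, whence $f(v^\star)-f(u)=\dprim{v^\star}-\dprim{u}+2\,d(u,v^\star)\ge d(u,v^\star)>0$, using the triangle inequality $\dprim{u}\le d(u,v^\star)+\dprim{v^\star}$ and positivity of edge weights. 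This contradicts the minimality of $v^\star$, so $P_1$ and $P_2$ meet only at $v^\star$ and $\xi$ is simple.

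Finally, the analogous inequalities handle the remaining edge cases: if $\xi$ would pass through $g^1$ before its end, the same triangle-inequality argument shows $g^1$ itself minimizes $f$, reducing the response to the simple shortest path $s_0$--$g^1$ whose terminal cost absorbs the rest. Either way an optimal best response to $\sigma^*$ is a simple path, and combining the lower bound with the achievability by $\xi$ shows the secondary \Mover{} never strictly benefits from revisiting a node, which is the claim.
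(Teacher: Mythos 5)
Your proposal is correct, but there is nothing in the paper to compare it against: the authors explicitly omit the proofs of all lemmas (``not difficult to establish; full details are available upon request''), so for this statement the paper provides no argument at all. Your proof stands on its own and is sound. The telescoping identity $U^2(\sigma^*,\phi)=W(\phi)+c_T+d(s_T,g^2)-\dprim{s_0}$ is the right way to exploit the structure of $\sigma^*$; the reduction of the best-response problem to the node potential $f(v)=d(s_0,v)+\dprim{v}+\dsec{v}$ is clean; and you correctly anticipated the genuine pitfall (naive loop surgery can destroy the closest-approach node and raise $c_T$), resolving it with the triangle-inequality argument that a shared node $u\neq v^\star$ of $P_1$ and $P_2$ would satisfy $f(u)<f(v^\star)$, and likewise dispatching the admissibility edge cases where $g^1$ or $g^2$ would be hit prematurely. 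Two remarks. First, your argument delivers strictly more than the lemma asks: it characterizes the equilibrium value for the secondary type as $\min_v f(v)-\dprim{s_0}$ and exhibits an optimal path of the form ``shortest path to a detour node $v^\star$, then shortest path to $g^2$,'' which reduces the search in~\eqref{eq: xi_B} from a combinatorial path enumeration to a minimization over nodes. Second, the same telescoping identity applied to the primary type gives $U^1=(c_0-c_T)+d(s_T,g^1)\geq \dprim{s_0}$ with equality when $s_T=g^1$, i.e., it simultaneously proves Lemma~\ref{Lem: mover payoff in game A}, whose proof the paper also omits; it would be worth stating the identity once and deriving both lemmas from it. One cosmetic caution: you write $f(v^\star)-f(u)=\dprim{v^\star}-\dprim{u}+2\,d(u,v^\star)$, which implicitly assumes symmetric distances; on a directed graph the two detour terms are $d(u,v^\star)$ and $d(v^\star,u)$, but the bound $f(v^\star)-f(u)\geq d(v^\star,u)>0$ survives via the same triangle inequality, so nothing breaks.
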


For convenience, we introduce a subclass of open-loop strategies $\olstrategyset$ that maps Attacker's type to a distribution over paths originating from $s_0$: i.e., $\olstrategy:\mathcal{I}\rightarrow\Delta(\Xi)$. 
This remains consistent with our original problem formulation, as Kuhn’s equivalence theorem~\cite{Kuhn} ensures that for any $\olstrategy$, there exists an equivalent behavioral strategy~$\gamma\in\Gamma$.

\begin{definition}
Let $\Xi (s_0)$ be the set of paths from $s_0$ to either goal that avoid the other goal along the way,
and let $T^\xi$ be the terminal time for some path $\xi\in\Xi(s_0)$.
We use $\xi^2$ to denote a path that minimizes $U^2(\cdot)$ against $\sigma^*$:
\begin{equation} \label{eq: xi_B}
    \xi^{2} \in \argmin_{\xi \in \Xi(s_0)} \sum^{t=T^{\xi}}_{t=1}w(e_t^\xi) \sigma_t^*(g^2|c_{t-1},e_{t}^\xi) + \terminalCost{2}(s_{T^\xi}).
\end{equation}
 \end{definition}

\textbf{The secondary \Mover{}'s strategy:}
The secondary \Mover{} deterministically uses $\xi^2$: i.e., $\olstrategy^{*}(\xi^2|g^2)=1$.

\subsection{Game Phases}
\noindent
Let $\sigma^{*}_t(g^1|\xi^{2}_{[0:t]})$ denote the probability that the \Eater{} allocates to $g^1$ after observing the history corresponding to the \Mover{}'s trajectory $\xi^{2}_{[0:t]}$.
We segment $\xi^2$ into three ordered phases based on the evolution of $\sigma^*_t(g^1|\xi^{2}_{[0:t]})$ as defined below. 
While we impose a canonical ordering of Phases~$\x{1}$-$\x{3}$, we do not assume that all trajectories $\sigma^*_t$ naturally exhibit this temporal structure.
Instead, when the required structure does not arise, we treat the corresponding phase(s) as degenerate or of zero duration. This convention enables us to apply a unified analysis across all trajectories.


\paragraph*{\textbf{Phase I}} This is the initial phase during which the \Eater{} deterministically allocates to $g^1$, i.e., $\sigma^{*}_t(g^1|\xi^{2}_{[0:t]})=1, \forall t\in[1,t_{\x{1}}]$. 
The end of this phase can be found as
\begin{equation}\label{eq: t_1}
     t_{\x{1}}\triangleq
     \max \{t|\sigma^*_t(g^1|\xi^{2}_{[0:\tau]})=1,\forall \tau \leq t\}.
 \end{equation}
Phase I exists, i.e., has non-zero duration, if and only if $\sigma^{*}_t(g^1|\xi^{2}_{[0:1]})=1$; otherwise, $t_\x{1}=0$ if $\sigma^{*}_t(g^1|\xi^{2}_{[0:1]})<1$.

\paragraph*{\textbf{Phase II}}
This phase is the duration between the first time step the \Eater{} allocates to $g^2$ with non-zero probability, and the last time step at which the \Eater{} allocates to $g^1$ with non-zero probability.
The end of phase~II can be found as
\begin{equation}\label{eq: t_2}
    t_{\x{2}}\triangleq
        \begin{cases}
        T^{\xi^2},&\text{ if } \sigma^{*}_t(g^1|\xi^{2})>0,\\
        \min \{t| \sigma^{*}_t(g^1|\xi^{2}_{[0:\tau]})=0,\forall \tau > t \},&\text{ otherwise}.
        \end{cases}
\end{equation}
Phase II exists if and only if $t_{\x{1}}<t_{\x{2}}$.

\paragraph*{\textbf{Phase~\Romannum{3}}} During this final phase, the \Eater{} deterministically allocates to $g^2$, i.e., $\sigma^*_t(g^1|\xi^{2}_{[0:t]})=0,\forall t\in[t_\x{2}+1,T^{\xi^2}]$. 
Phase~\Romannum{3} exists if and only if $t_{\text{\Romannum{2}}} < T^{\xi^2}$.


The next section discusses the assignment of beliefs to the states along $\xi^2$.

\subsection{Sequential Rationalization}\label{subsec: primary}
\noindent
This section derives the primary \Mover{}'s strategy that satisfies the following conditions.

\begin{itemize}
\setlength{\itemindent}{-2em}
    \item[]\normalfont{(\textbf{C1})} The primary Attacker best responds to $\sigma^*$. 
   \item[]\normalfont{(\textbf{C2})} The combination of the primary \Mover{}'s strategy $\gamma^*_t(e_t|h_t,g^1)$ and the deterministic secondary \Mover{}'s strategy induces beliefs $b_t$ that rationalize $\sigma^*$.
\end{itemize}
Condition (\textbf{C1}) is met if the primary Attacker's cost is~\textit{no greater} than $\bar{U}^1(s_0)$, which is the case if it does not visit $g^2$ (see Lemma~\ref{Lem: mover payoff in game A}).
To see how condition (\textbf{C2}) can be satisfied, we start by presenting how our specific game formulation allows the \Eater{} to make rational/optimal decisions based solely on the belief at a given time step without considering future expected payoff. 

To this end, we consider a sub-class of \Mover{} strategies, $\widetilde{ \Gamma}$, that do not react to the \Eater{}'s action history. More formally, $\tilde \gamma: \widetilde{\mathcal{H}}_t \times \mathcal{I} \to \Delta(\mathcal{A}^A(\cdot))$, where $\tilde{h}_t\in\widetilde{\mathcal{H}}_t$ does not contain the \Eater{}'s actions: $\tilde{h}_{t}=(s_0,e_1,s_1,...,e_{t-1},s_{t-1})$.
Based on Kuhn's equivalence theorem, we can see that any mixed strategy $\olstrategy$, such as the secondary Attacker's strategy, can be equivalently represented by some $\tilde\gamma\in\tilde\Gamma$.
As we show later, the primary Attacker's strategy is also constructed in the set $\olstrategyset$, making the following result particularly relevant.

\begin{lem}\label{lem. greedy Eater}
For any $\tilde{\gamma} \in \widetilde{\Gamma}$, the \Eater{}'s strategy is optimal (sequentially rational) if and only if it is greedy with respect to its belief. 
Formally, given a belief, $b_t$, propagated according to~\eqref{Bayes} for an admissible history $(h_t, e_t, s_t)$, the \Eater{}'s optimal strategy must satisfy:

\begin{equation}
      \sigma^{*}_t(g^i|h_t, e_t, s_t) =1,  \quad \text{if  } ~ \beliefi{i}{t} > \beliefi{-i}{t}, 
\end{equation}
otherwise, if $\beliefi{1}{t} =\beliefi{2}{t} = 0.5$, any $\sigma^{*}_t(g^i|h_t, e_t, s_t) \in[0,1]$ is optimal.
\end{lem}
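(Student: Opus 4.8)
The plan is to exploit the defining feature of the restricted strategy class $\widetilde{\Gamma}$: because $\tilde\gamma$ never conditions on the \Eater{}'s past actions, the entire distribution over the \Mover{}'s trajectory $(s_0,e_1,s_1,\dots,e_T,s_T)$ --- and hence over the terminal time $T$, the terminal cost $\terminalCost{\theta}(s_T)$, and every belief $b_t$ generated through \eqref{Bayes} --- is determined by $\tilde\gamma$ alone and is completely unaffected by $\sigma$. The consequence I want to extract is that the \Eater{}'s choice at any single information set influences only the stage cost incurred at that step, so the dynamic optimization collapses into independent, static optimizations at each information set. This is precisely the feature that fails for a general $\gamma\in\Gamma$ and is the reason the lemma is restricted to $\widetilde{\Gamma}$; establishing this decoupling cleanly is the main obstacle.

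To make the decoupling precise I would first note that the terminal-cost term $\sum_\theta b^\theta_0\,\mathbb{E}^{\tilde\gamma}[\terminalCost{\theta}(s_T)]$ is constant in $\sigma$, and then regroup the remaining (controllable) expectation by the \Eater{}'s information sets. Writing $o$ for a length-$t$ prefix $(s_0,e_1,\dots,e_t,s_t)$ of a trajectory, with $e_t$ its final edge, and $p^\theta(o)=\prod_{k=1}^{t}\tilde\gamma_k(e_k\mid\cdot,g^\theta)$ for its likelihood under type $\theta$, a Fubini/reindexing argument turns the expected allocation into
\begin{equation*}
\sum_{o}\; w(e_t)\Big[b^1_0\,p^1(o)\,\sigma_t(g^1\mid o)+b^2_0\,p^2(o)\,\sigma_t(g^2\mid o)\Big],
\end{equation*}
a sum in which the decision variable $\sigma_t(\cdot\mid o)$ attached to each information set $o$ appears in exactly one summand. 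Because the \Eater{} maximizes and $w(e_t)>0$, each summand is a linear functional of $\sigma_t(\cdot\mid o)$ over the simplex, so it is maximized by placing all mass on whichever goal $g^i$ has the larger coefficient $b^i_0 p^i(o)$, with any split optimal when the two coefficients coincide.

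The final step is to translate the coefficient comparison into a belief comparison. Iterating \eqref{Bayes} from the prior gives $b^i_t = b^i_0\,p^i(o)\big/\sum_j b^j_0\,p^j(o)$, so on any admissible (positive-probability) history the normalizing denominator is a common positive constant, and $b^i_0\,p^i(o)>b^{-i}_0\,p^{-i}(o)$ holds exactly when $b^i_t>b^{-i}_t$. Since the maximizer of a linear function over the simplex is the corresponding vertex when the coefficients differ and is arbitrary when they coincide, I conclude that sequential rationality holds if and only if $\sigma^*_t(g^i\mid\cdot)=1$ whenever $b^i_t>b^{-i}_t$, while any allocation is optimal at the unique tie $b^1_t=b^2_t=0.5$; this simultaneously yields the ``if'' and ``only if'' directions. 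I would close by remarking that the same computation shows this greedy rule is not merely sequentially rational but a global best response against every $\tilde\gamma\in\widetilde{\Gamma}$, the two notions coinciding precisely because of the decoupling established in the first step.
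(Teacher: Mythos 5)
Your proposal is correct, and it reaches the lemma by a genuinely different route than the paper. The paper's argument is recursive: it first reduces the \Eater{}'s optimization to histories $\tilde h_t$ without its own past actions, then writes a dynamic program whose value function $V^*_t$ is the current expected stage cost plus a continuation term, and observes that, because $\tilde\gamma\in\widetilde{\Gamma}$ never reacts to the \Eater{}'s actions, the continuation term is independent of $\sigma_t$; the Bellman maximization therefore collapses to the one-step problem $\max_{\sigma_t}\sum_i b^i_t\,\sigma_t(g^i|\cdot)$, which yields the greedy rule. You instead avoid the value function entirely: you expand the total expected payoff as a sum over trajectory prefixes (information sets), note that each decision variable $\sigma_t(\cdot\,|\,o)$ appears in exactly one summand with coefficients $b^i_0\,p^i(o)$ that are, by iterated Bayes' rule, proportional to the posteriors $b^i_t$, and then maximize each linear summand over the simplex. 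Both proofs rest on the same structural fact --- the decoupling afforded by $\widetilde{\Gamma}$ --- but the formalizations differ, and each buys something: your normal-form decomposition never needs a value function over the continuous belief space (which the paper itself flags in a footnote as only conceptually useful), and it makes the ``only if'' direction and the equivalence of sequential rationality with global best response explicit, since a non-greedy choice at any positive-probability information set strictly lowers exactly one summand (as $w(e_t)>0$) while leaving all others untouched; the paper's dynamic program, in turn, ties more directly to the definition of sequential rationality as conditional optimality at each history and would adapt more naturally if the continuation value depended weakly on $\sigma_t$. The only step you should state explicitly rather than gesture at is the justification of the Fubini/reindexing: all terms are nonnegative (so Tonelli applies) and admissible \Mover{} strategies terminate the game almost surely in finite time, which is what licenses exchanging the expectation with the sum over $t$.
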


\begin{sketch}
Given an Attacker strategy $\tilde \gamma \in \tilde \Gamma$, it can be shown that $\sigma$ can be optimized without the dependency on its own action history: i.e., with $\tilde h_t$.  We can then consider the following dynamic program that optimizes $\sigma$:

 
    \begin{gather*}
        V_t^*(\tilde h_t, e_t, s_t) = 
        \max_{\sigma_t} \sum_{i\in \{1,2\}} b^i_t \Big[ \sigma_t(g^i|\tilde h_t, e_t, s_t) w(e_t) + \\ 
        \sum_{e_{t+1} \in \mathcal{E}(s_t)} \tilde{\gamma}^{i}_{t+1} V^*_{t+1}(\tilde h_{t+1}, e_{t+1}, s_t+e_{t+1})\Big],
  \end{gather*}
  where $\tilde{\gamma}^i_{t+1}\triangleq\tilde{\gamma}_{t+1}(e_{t+1}|\tilde{h}_t, e_t, s_t, g^i)$, and $s_t + e_{t+1}=s_{t+1}$ is the new state arrived after taking edge $e_{t+1}$.\footnote{Note that this value function is conceptually useful for proving the lemma, but solving the corresponding dynamic program is impractical because $b_t$ is a continuous variable.}
 Since the second term in square brackets is independent of $\sigma_t$, finding the optimal $\sigma^*_t$ reduces to  $\max_{{\sigma}_t} \sum_{i \in \{1, 2\}} \beliefi{i}{t} {\sigma}_t(g^i |h_t, e_t,s_t)$.
This reduction directly yields the result of the lemma.
\end{sketch}

To construct $\gamma^*_t(e_t|h_{t},g^1)$, the \Mover{} must consider beliefs on $\xi^2$ that will rationalize $\sigma^*$.
Let $\beliefi{1}{t}(\xi)$ denote the belief that $g^\theta=g^1$, after observing some trajectory $\xi_{[0:t]}$:
\begin{equation}\label{eq: belief on xi_B def}
    \beliefi{1}{t}(\xi)\triangleq \mathbb{P}^{\sigma^*,\gamma}(g^1|\xi_{[0:t]}).
\end{equation} 
The following lemma presents a restriction on beliefs that the \Mover{} can possibly induce on the states along $\xi^2$.
\begin{lem}[Monotonicity]\label{lem:monotonicity}
The belief $\beliefi{1}{t}(\xi^2)$ is a non-increasing function of time: i.e., 
\begin{equation} \label{eq: monotonicity of beta^xi_B}
   \beliefi{1}{t}(\xi^2)\leq \beliefi{1}{t-1}(\xi^2),\forall t\geq 1.
\end{equation}
\end{lem}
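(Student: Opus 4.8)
The plan is to prove the monotonicity directly from the Bayes update~\eqref{Bayes}, exploiting the single structural fact that the secondary Attacker follows $\xi^2$ deterministically. Fix a time step $t$ and let $e_t$ be the edge of $\xi^2$ traversed at $t$. Because the secondary strategy is $\olstrategy^{*}(\xi^2|g^2)=1$, the secondary type takes this edge with certainty, so along $\xi^2$ we have $\gamma_t(e_t|h_t,g^2)=1$, the maximal value a probability can attain. The primary type's probability of taking the same edge, $\gamma_t(e_t|h_t,g^1)$, is at most $1$. The whole result is then a one-line consequence of comparing these two likelihoods inside Bayes' rule, and crucially it holds for \emph{any} primary strategy $\gamma$, which is exactly the ``restriction on inducible beliefs'' the lemma asserts.

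Concretely, I would substitute into~\eqref{Bayes}, abbreviating $b\triangleq\beliefi{1}{t-1}(\xi^2)$, $p\triangleq\gamma_t(e_t|h_t,g^1)\le 1$, and using $\beliefi{2}{t-1}(\xi^2)=1-b$, to obtain
\begin{equation*}
   \beliefi{1}{t}(\xi^2)=\frac{b\,p}{b\,p+(1-b)}.
\end{equation*}
Clearing the (strictly positive) denominator in the inequality $\beliefi{1}{t}(\xi^2)\le b$ and cancelling reduces it to $p\le 1$, which always holds; the degenerate case $b=0$ gives equality. This yields~\eqref{eq: monotonicity of beta^xi_B} at the fixed step $t$, and since $t$ was arbitrary the claim follows for all $t\ge 1$.

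The only point requiring care—and the one I would flag as the main (if modest) obstacle—is ensuring the denominator in~\eqref{Bayes} is strictly positive so the update is well-defined along the entire path. Since the secondary term contributes $(1-b)\cdot 1$, the denominator is at least $\beliefi{2}{t-1}(\xi^2)$, so it suffices to show $\beliefi{2}{t-1}(\xi^2)>0$ throughout. I would close this by induction: the prior satisfies $\beliefi{2}{0}\in(0,1)$, and the monotonicity just derived gives $\beliefi{2}{t}(\xi^2)\ge\cdots\ge\beliefi{2}{0}>0$, so Bayes' rule applies at every step and the induction is self-consistent. The prior lying strictly inside $(0,1)$ therefore rules out the only troublesome case ($b=1$), in which the belief would simply remain pinned at $1$ and the inequality would hold with equality anyway, so no generality is lost.
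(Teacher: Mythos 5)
Your proof is correct. Note that the paper itself omits the proof of this lemma (it falls under the authors' blanket remark that proofs of such lemmas are ``not difficult to establish''), so there is no published argument to compare against; your direct Bayes-rule computation---using that the secondary type traverses $\xi^2$ with likelihood $1$ while the primary type's likelihood $p$ is at most $1$, together with the induction $b^2_t \geq b^2_0 > 0$ that keeps the denominator positive and the update well-defined---is exactly the natural argument the statement calls for, and it correctly holds for an arbitrary primary strategy, which is the point of the lemma.
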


Based on Lemmas~\ref{lem. greedy Eater} and~\ref{lem:monotonicity}, we can bound the beliefs for each phase on $\xi^2$. For $\sigma^*$ to be rational along $\xi^2$, 
it is sufficient for $\beliefi{1}{t}(\xi^2)$ to satisfy the following:
\begin{itemize}
\item[]$\beliefi{1}{t}(\xi^2)\geq 0.5$, $\forall t\in[0, t_{\x{1}}]$ (Phase~$\x{1}$), 
\item []$\beliefi{1}{t}(\xi^2)=0.5$, $\forall t\in( t_{\x{1}}, t_{\x{2}}]$  (Phase~$\x{2}$),
and 
\item []$\beliefi{1}{t}(\xi^2)\leq 0.5$, $\forall t> t_{\x{2}}$  (Phase~$\x{3}$).
\end{itemize}
Based on these belief constraints, we construct primary \Mover{}'s strategy that satisfies conditions (\textbf{C1}) and (\textbf{C2}).


\begin{definition}[Primary \Mover{}'s paths]\label{def. attacker's paths}
Let $\psi(s,g^i)$ denote a shortest path from $s$ to $g^i$.
We use $\xi^{1,\x{1}}$ to denote a path constructed by connecting the phase~I portion of $\xi^2$, and a shortest path from $s_{t_\x{1}}$ to $g^1$: i.e.,
\begin{equation*}
    \xi^{1,\x{1}} \triangleq (\xi^2_{[0:t_\x{1}-1]}, \psi(s_{t_\x{1}},g^1)).
\end{equation*}$ $
The second path, $\xi^{1,\x{2}}$, is defined similarly as follows:
\begin{equation*}
    \xi^{1,\x{2}} \triangleq (\xi^2_{[0:t_\x{2}-1]}, \psi(s_{t_\x{2}},g^1)).
\end{equation*}
\end{definition}

Figure~\ref{fig. segments of paths} provides an example of a game in which all three phases exist on~$\xi^2$. Note that if phase~\Romannum{2} does not exist, i.e. $t_\x{1}=t_\x{2}$, then the two paths are identical $\xi^{1,\x{1}}=\xi^{1,\x{2}}$.
Also note that if $t_\x{1}=T^{\xi^2}$, then $\xi^{1,\x{1}}=\xi^2$.
Similarly, if $t_\x{2}=T^{\xi^2}$, then $\xi^{1,\x{2}}=\xi^2$.
Mixing between these two paths at a particular probability generates the desired beliefs that satisfy~(\textbf{C2}).

\begin{figure}[h!]
    \centering
    \includegraphics[width = 0.5\textwidth]{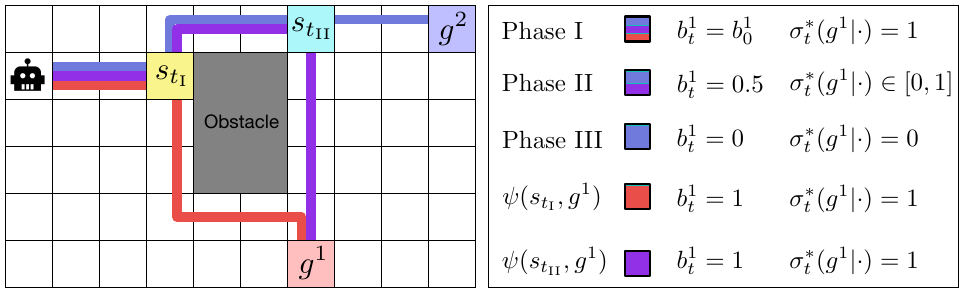}
    \caption{Paths used by the \Mover{}, when all three phases exist.}
    \label{fig. segments of paths}
\end{figure}

\paragraph*{\textbf{The primary \Mover{}'s strategy}}
The primary \Mover{} mixes between taking the paths $\xi^{1,\x{1}}$ and $\xi^{1,\x{2}}$ with probability ${\beliefi{2}{0}}/{\beliefi{1}{0}}$ and $1-{\beliefi{2}{0}}/{\beliefi{1}{0}}$:
\begin{equation}
\label{eq. Attacker's probabilities}
    \olstrategy^*(\xi^{1,\x{2}}|g^1)=1-\olstrategy^*(\xi^{1,\x{1}}|g^1)=\frac{\beliefi{2}{0}}{\beliefi{1}{0}}.
\end{equation}
Note that, for any $b^1_0\in[0.5,1)$, the equilibrium paths remain the same, although the mixing probability in~\eqref{eq. Attacker's probabilities} varies. 

Now we show what beliefs are induced by the overall $\gamma^*$.

\begin{lem}
\label{lem. induced beliefs}
     The \Mover{}'s overall strategy $\gamma^*$ induces following beliefs:
    \begin{itemize}
        \item []$\beliefi{1}{t}(\xi^2)=\beliefi{1}{0}$, $\forall t \in [0, t_\x{1}]$ (Phase~$\x{1}$), 
        \item[]$\beliefi{1}{t}(\xi^2)=0.5$, $\forall t \in ( t_\x{1}, t_\x{2}]$ (Phase~$\x{2}$), 
        \item[]$\beliefi{1}{t}(\xi^2)=0$, $\forall t \in ( t_\x{2}, T^{\xi^2}]$ (Phase~$\x{3}$),
    \end{itemize}
and $\beliefi{1}{t}(\xi^{1,\alpha})=1$ on the segment $\psi(s_{t_\alpha},g^1)$ after the branching from $\xi^2$, for $\alpha\in \{ \x{1},\x{2} \}$.
\end{lem}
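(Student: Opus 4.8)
The plan is to evaluate the posterior $\beliefi{1}{t}(\xi^2)=\mathbb{P}^{\sigma^*,\gamma^*}(g^1\mid\xi^2_{[0:t]})$ directly from Bayes' rule, written in its mixed-over-paths form rather than the edge-by-edge update in~\eqref{Bayes}; the two agree by Kuhn's theorem. Writing $q\triangleq\beliefi{2}{0}/\beliefi{1}{0}\in(0,1]$, the secondary \Mover{} deterministically follows $\xi^2$, so its likelihood of producing the prefix $\xi^2_{[0:t]}$ is $1$ for every $t\le T^{\xi^2}$, while the primary \Mover{} produces this prefix with probability equal to the total mass of its two paths whose prefix coincides with $\xi^2_{[0:t]}$. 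The entire computation therefore reduces to determining, for each $t$, which of $\xi^{1,\x{1}}$ and $\xi^{1,\x{2}}$ still agree with $\xi^2$ up to time $t$, and then substituting the mixing weights $\olstrategy^*(\xi^{1,\x{1}}\mid g^1)=1-q$ and $\olstrategy^*(\xi^{1,\x{2}}\mid g^1)=q$ from~\eqref{eq. Attacker's probabilities}.

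First I would establish the matching pattern phase by phase. By construction $\xi^{1,\x{1}}$ coincides with $\xi^2$ through node $s_{t_\x{1}}$ and $\xi^{1,\x{2}}$ through $s_{t_\x{2}}$, so both agree with $\xi^2$ on Phase~$\x{1}$ ($t\le t_\x{1}$), only $\xi^{1,\x{2}}$ agrees on Phase~$\x{2}$ ($t_\x{1}<t\le t_\x{2}$), and neither agrees on Phase~$\x{3}$ ($t>t_\x{2}$). Hence the primary-type likelihood of the prefix is $1$, $q$, and $0$ on the three phases, respectively. Bayes' rule then yields $\beliefi{1}{t}(\xi^2)=\beliefi{1}{0}\cdot 1/(\beliefi{1}{0}\cdot 1+\beliefi{2}{0}\cdot 1)=\beliefi{1}{0}$ on Phase~$\x{1}$ (using $\beliefi{1}{0}+\beliefi{2}{0}=1$); $\beliefi{1}{0}q/(\beliefi{1}{0}q+\beliefi{2}{0})=\beliefi{2}{0}/(\beliefi{2}{0}+\beliefi{2}{0})=1/2$ on Phase~$\x{2}$; and $0$ on Phase~$\x{3}$, exactly the claimed values. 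For the branch segments, on $\psi(s_{t_\alpha},g^1)$ the observed prefix has already departed from $\xi^2$, so the secondary type has likelihood $0$ and, provided $\olstrategy^*(\xi^{1,\alpha}\mid g^1)>0$, Bayes gives $\beliefi{1}{t}(\xi^{1,\alpha})=1$. (At $\beliefi{1}{0}=0.5$ the weight on $\xi^{1,\x{1}}$ vanishes, so the belief on that branch is left unspecified, as permitted by Definition~\ref{def:PBNE}.)

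The step I expect to be the main obstacle is justifying that each primary-type path departs from $\xi^2$ \emph{exactly} at its branching time, i.e.\ that the first edge of $\psi(s_{t_\alpha},g^1)$ differs from the $(t_\alpha+1)$-th edge of $\xi^2$; without this the likelihoods would not drop at the phase boundaries and the Phase~$\x{2}$ and~$\x{3}$ values would be wrong. I would settle this using the progress function. At the end of Phase~$\x{1}$ the \Mover{} has made full progress at every step, so $c_{t_\x{1}}=\dprim{s_{t_\x{1}}}$, and the first edge of the shortest path satisfies $\sigma^*(g^1\mid\cdot)=1$, whereas $\xi^2$'s next edge gives $\sigma^*(g^1\mid\cdot)<1$ by the definition~\eqref{eq: t_1} of $t_\x{1}$; since $\sigma^*$ is a deterministic function of $(c_{t-1},e_t)$ and $c_{t_\x{1}}$ is shared, the two edges must differ. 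At $t_\x{2}$, the defining property $\sigma^*_{t_\x{2}}(g^1\mid\xi^2)>0$ forces $c_{t_\x{2}-1}-c_{t_\x{2}}>0$, hence $s_{t_\x{2}}$ is a new closest point with $c_{t_\x{2}}=\dprim{s_{t_\x{2}}}$; the shortest-path edge then makes strictly positive progress ($\sigma^*=1$) while $\xi^2$'s next edge makes none ($\sigma^*=0$ in Phase~$\x{3}$), so again the edges differ. Finally, because the prefix at any time contains all earlier edges, once the histories disagree at one time they disagree at every later time, which is precisely the monotone matching pattern required above.
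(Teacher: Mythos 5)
Your proposal is correct and, as far as can be judged, matches the intended argument: the paper omits the proof of this lemma entirely (its lemma proofs are "available upon request"), and the direct Bayes computation over the path mixture---likelihoods $1$, $\beliefi{2}{0}/\beliefi{1}{0}$, and $0$ for the primary type across the three phases, against likelihood $1$ for the secondary type---is exactly the natural proof implied by Definition~\ref{def. attacker's paths} and \eqref{eq. Attacker's probabilities}. In particular, you correctly identified and resolved the one nontrivial step: showing that $\xi^{1,\x{1}}$ and $\xi^{1,\x{2}}$ diverge from $\xi^2$ exactly at steps $t_{\x{1}}+1$ and $t_{\x{2}}+1$, which you establish by noting that $c_{t_\alpha}=d(s_{t_\alpha},g^1)$ at each branch point, so the first edge of $\psi(s_{t_\alpha},g^1)$ receives $\sigma^*$-value $1$ while the next edge of $\xi^2$ receives a value $<1$ (resp.\ $0$), and since $\sigma^*$ is a deterministic function of $(c_{t-1},e_t)$ the edges must differ---precisely the fact that makes the likelihood drops occur at the phase boundaries and yields the claimed values $\beliefi{1}{0}$, $0.5$, $0$, and $1$ (with the $\beliefi{1}{0}=0.5$ degenerate branch correctly deferred to the off-path freedom allowed by Definition~\ref{def:PBNE}).
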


Finally we state that overall Attacker's strategy $\gamma^*$ and the Defender's strategy $\sigma^*$ form a PBNE.
\begin{theorem}[PBNE]\label{Th: PBNE}
The strategy profile $(\sigma^*,\gamma^*)$, where $\sigma^*$ is defined in~\eqref{eq: pi^E_A}, and $\gamma^*$ is defined in~\eqref{eq: xi_B} and \eqref{eq. Attacker's probabilities}, constitutes a weak PBNE as defined in Definition~\ref{def:PBNE}.
\end{theorem}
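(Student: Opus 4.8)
The plan is to verify directly the three defining conditions of a weak PBNE from Definition~\ref{def:PBNE}: the \Mover{}'s best-response inequality~\eqref{Eq. BNE Mover}, belief consistency, and sequential rationality of $\sigma^*$; since the latter two subsume~\eqref{Eq. BNE Eater}, these three together establish the claim. The work splits cleanly because $\sigma^*$ is fixed and Markovian, so the induced beliefs (Lemma~\ref{lem. induced beliefs}) and the \Eater{}'s optimality criterion (Lemma~\ref{lem. greedy Eater}) can be handled separately from the \Mover{}'s optimization.

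First I would establish~\eqref{Eq. BNE Mover}, i.e.\ that $\gamma^*$ best responds to $\sigma^*$. Since $\sigma^*$ depends only on $(c_{t-1},e_t)$ and not on beliefs, the total payoff $U=\beliefi{1}{0}U^1+\beliefi{2}{0}U^2$ decouples, and it suffices to minimize $U^1$ and $U^2$ separately over the \Mover{}'s type-specific play. For the secondary type, Lemma~\ref{lem: sec mover must use path} reduces the search to the finite path set $\Xi(s_0)$, and $\xi^2$ minimizes $U^2$ by its defining equation~\eqref{eq: xi_B}, so $\olstrategy^*(\xi^2|g^2)=1$ is optimal. For the primary type, I would note that both $\xi^{1,\x{1}}$ and $\xi^{1,\x{2}}$ terminate at $g^1$ without previously visiting $g^2$ (their $\xi^2$-prefixes avoid $g^1$ by the definition of $\Xi(s_0)$ and terminate before reaching $g^2$, while the appended shortest sub-paths can be taken to avoid $g^2$). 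Hence $\mathbb{P}^{\gamma^*}(s_T=g^1\mid g^\theta=g^1)=1$, so each path attains the lower bound $U^1=\dprim{s_0}$ of Lemma~\ref{Lem: mover payoff in game A}; any mixture of two minimizers is again a minimizer, so the mixing rule~\eqref{eq. Attacker's probabilities} is optimal, and it is well defined since $\beliefi{2}{0}/\beliefi{1}{0}\in[0,1]$ by the primary/secondary convention.

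Next I would dispatch belief consistency and sequential rationality together. Belief consistency is exactly the content of Lemma~\ref{lem. induced beliefs}: the listed beliefs are those produced by Bayes' rule~\eqref{Bayes} under $\gamma^*$ along $\xi^2$ and along the branches, and at histories off the support of $\gamma^*$ the weak notion leaves beliefs unspecified. For sequential rationality, I would first observe that $\gamma^*\in\olstrategyset$ is open-loop and therefore equivalent to some $\tilde\gamma\in\widetilde\Gamma$ that ignores the \Eater{}'s action history, so Lemma~\ref{lem. greedy Eater} applies and the \Eater{} is optimal at an information set iff it allocates greedily to the goal of higher belief. It then remains to match $\sigma^*$ against the beliefs of Lemma~\ref{lem. induced beliefs} phase by phase: in Phase~\x{1} the belief is $\beliefi{1}{t}=\beliefi{1}{0}\geq0.5$ and $\sigma^*$ allocates deterministically to $g^1$ by the definition of $t_{\x{1}}$ in~\eqref{eq: t_1}; in Phase~\x{2} the belief equals $0.5$, where Lemma~\ref{lem. greedy Eater} makes every allocation optimal; in Phase~\x{3} the belief is $\beliefi{1}{t}=0$ and $\sigma^*$ allocates deterministically to $g^2$ by~\eqref{eq: t_2}; and on each branch $\psi(s_{t_\alpha},g^1)$ the belief is $\beliefi{1}{t}=1$ while $\sigma^*$ allocates to $g^1$ since the \Mover{} is making maximal new progress. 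In every on-path information set the prescribed $\sigma^*$ coincides with a greedy action, so $\sigma^*$ is sequentially rational.

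The main obstacle is this last matching step: it requires that the phase boundaries defined through $\sigma^*$ in~\eqref{eq: t_1}--\eqref{eq: t_2} line up precisely with the belief thresholds of Lemma~\ref{lem. greedy Eater}, i.e.\ that $\sigma^*$ is deterministic toward $g^1$ exactly where the belief is $\geq0.5$, unconstrained exactly where the belief is $0.5$, and deterministic toward $g^2$ exactly where the belief is $0$. This is where the monotonicity of the induced belief (Lemma~\ref{lem:monotonicity}) and the ``slackness'' of Lemma~\ref{Lem: mover payoff in game A} do the real work: slackness lets the primary \Mover{} mix freely to drive the belief to exactly $0.5$ throughout Phase~\x{2} without sacrificing optimality of $U^1$, and monotonicity guarantees the belief never re-crosses $0.5$ so that no phase ordering is violated. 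Once this alignment is confirmed, the three conditions of Definition~\ref{def:PBNE} hold and $(\sigma^*,\gamma^*)$ is a weak PBNE.
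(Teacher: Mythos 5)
Your overall architecture matches the paper's proof: decouple the \Mover{}'s best response by type (secondary type via Lemma~\ref{lem: sec mover must use path} and the defining property~\eqref{eq: xi_B}; primary type via the lower bound of Lemma~\ref{Lem: mover payoff in game A}), then combine Lemmas~\ref{lem. greedy Eater} and~\ref{lem. induced beliefs} to get belief consistency and sequential rationality, with off-path beliefs left unspecified under the weak notion. However, there is a genuine gap in your primary-type step. You assert that $\xi^{1,\x{1}}$ and $\xi^{1,\x{2}}$ terminate at $g^1$ without ever visiting $g^2$, so that the equality clause $\mathbb{P}^{\gamma^*}(s_T=g^1\mid g^\theta=g^1)=1$ of Lemma~\ref{Lem: mover payoff in game A} applies. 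This is false in general. When $t_{\x{2}}=T^{\xi^2}$ (phase~\x{3} degenerate) and $\xi^2$ terminates at $g^2$, the paper sets $\xi^{1,\x{2}}=\xi^2$, so the primary \Mover{}'s path ends at the \emph{fake} goal; the same happens with $\xi^{1,\x{1}}$ when $t_{\x{1}}=T^{\xi^2}$ --- Fig.~\ref{fig. selling fig}(d), where $\xi^2$ fully overlaps a portion of $\xi^1$ because $g^2$ lies on the way to $g^1$, is exactly this situation. Your parenthetical fix that the appended shortest sub-path $\psi(s_{t_\alpha},g^1)$ ``can be taken to avoid $g^2$'' is also unjustified: in a line graph where $g^2$ sits between $s_0$ and $g^1$, every path to $g^1$ passes through $g^2$. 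In all these cases the game terminates at $s_T=g^2$ with terminal cost $\terminalCost{1}(g^2)=d(g^2,g^1)>0$, the hypothesis of the equality clause fails, and your verification of~\eqref{Eq. BNE Mover} collapses at precisely the step the paper flags as requiring a separate argument (``achieves $U^1(\sigma^*,\gamma^*;s_0,b_0)=d(s_0,g^1)$ even when visiting the fake goal'').

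The missing piece is a direct computation with the progress function. Along any path, the expected allocation to $g^1$ under $\sigma^*$ telescopes: $\sum_{t} w(e_t)\,\sigma^*_t(g^1|c_{t-1},e_t)=\sum_{t}(c_{t-1}-c_t)=c_0-c_T$, so $U^1=(c_0-c_T)+d(s_T,g^1)$, which attains the bound $\dprim{s_0}$ exactly when $c_T=d(s_T,g^1)$, i.e., when the terminal node is the closest node to $g^1$ visited so far. Whenever the prescribed primary path reaches $g^2$, it does so either at step $t_\alpha=T^{\xi^2}$ (where, by~\eqref{eq: t_1} or~\eqref{eq: t_2}, $\sigma^*$ allocates to $g^1$ with positive probability, hence new progress is made) or along a shortest-path branch $\psi(s_{t_\alpha},g^1)$ (where full progress $c_{t-1}-c_t=w(e_t)$ is made at every step); in either case $c_T=d(g^2,g^1)$, so $U^1=\dprim{s_0}-d(g^2,g^1)+d(g^2,g^1)=\dprim{s_0}$. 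With this in place, every pure path in the support of $\gamma^*$ attains the lower bound of Lemma~\ref{Lem: mover payoff in game A}, hence so does the mixture in~\eqref{eq. Attacker's probabilities}, and the remainder of your argument (the phase-by-phase matching of $\sigma^*$ against the beliefs of Lemma~\ref{lem. induced beliefs}) goes through as in the paper.
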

\begin{sketch}
Based on Lemma~2 and \eqref{eq: xi_B}, the secondary \Mover{} has no incentive to deviate given $\sigma^*$ and any primary Attacker's strategy.
It can also be shown (omitted for page constraint) that the primary Attacker strategy described in Definition~\ref{def. attacker's paths} and \eqref{eq. Attacker's probabilities} achieves $U^1(\sigma^*,\gamma^*;s_0,b_0)=d(s_0,g^1)$  even when visiting the fake goal. 
This primary \Mover{}'s cost is optimal based on Lemma~1. Therefore, the Attacker has no incentive to deviate from $\gamma^*$.
By combining the results of Lemmas~\ref{lem. greedy Eater} and~\ref{lem. induced beliefs}, we see that the induced beliefs along all three equilibrium paths make the actions prescribed by $\sigma^*$ sequentially rational, thereby confirming that $\sigma^*$ is an equilibrium strategy.
\end{sketch}
Note that the \Eater{}'s beliefs for non-equilibrium \Mover{}'s strategies are not defined. However the \Eater{}'s strategy is well-defined for any possible history of \Mover{}'s actions and states.

\section{Discussion}\label{sec.:analysis of pbne}
\noindent

\noindent
{In general, a Bayesian Game may not have a PBNE. 
However, the constructive approach developed in this paper provides a solution for any given graph, goal locations, and initial position $s_0$, if the goals can be reached from any node. 
This indicates the existence of a PBNE for the deceptive path-planning game studied in this paper.}

A natural question to ask next is how our approach works for generic Bayesian games. We list some key structures that allow our method to work.

\emph{One-sided type uncertainty.}
Only the \Mover{} has a private type, and furthermore, the belief on this private type can be updated using information commonly available to both agents.
This structure ensures that both agents can keep a common belief~\cite{mahajan2012information}, and thus avoids the complexity of keeping a belief on the other agent's belief. 
        
\emph{Structure of dynamics and payoff.} 
The \Eater{} cannot directly control the state dynamics, and similarly, the \Mover{} cannot control the stage payoff. The payoff function also has the structure that represents resources allocation, i.e., the expected costs that the \Eater{} incurs on the two goals have a constant sum (regardless of the \Eater{}'s strategy). These properties allow Lemma~\ref{lem. greedy Eater}, which helps us simplify the assignment of beliefs for the rationalization. 

There are other simplifications that are made in this paper, including deterministic state transitions and the fact that there are only two candidate goals. The extension of the proposed procedure to relax these assumptions is the subject of ongoing work.

\noindent

\section{Numerical Experiments}
\noindent
To evaluate the effectiveness of the proposed strategies we use Value of Information defined below.
\begin{definition}[Value of Information (VoI).] We define VoI as the Defender's expected loss due to the lack of information with respect to the complete information value:
\begin{equation}
    \textnormal{VoI}(\sigma,\gamma;s_0,b_0) = \frac{\bar U(s_0,b_0) - U(\sigma,\gamma;s_0,b_0)}{\bar U(s_0,b_0)},
\end{equation}
    where $\bar U(s_0,b_0)\triangleq b_0^1 \cdot \bar U^1(s_0) + b_0^2 \cdot \bar U^2(s_0)$ is the complete information value over both games (see Remark~\ref{rem. Complete information game.}).
\end{definition}
For the same $s_0$ and $b_0$, the optimal strategies satisfying~\eqref{eq:equilibrium} will also optimize the VoI, where the \Eater{}/\Mover{} is the minimizer/maximizer.
We use $\text{VoI}^*\triangleq\text{VoI}(\sigma^*,\gamma^*)$ to denote the equilibrium $\text{VoI}$.
Importantly, for different environments and/or different $(s_0,b_0)$, VoI can be used as a metric to quantify the effectiveness of deception in a particular scenario.
\subsection{Illustrative example}\label{num sec. sec A}

\begin{figure}[t]
\centering
\includegraphics[width = 0.48\textwidth]{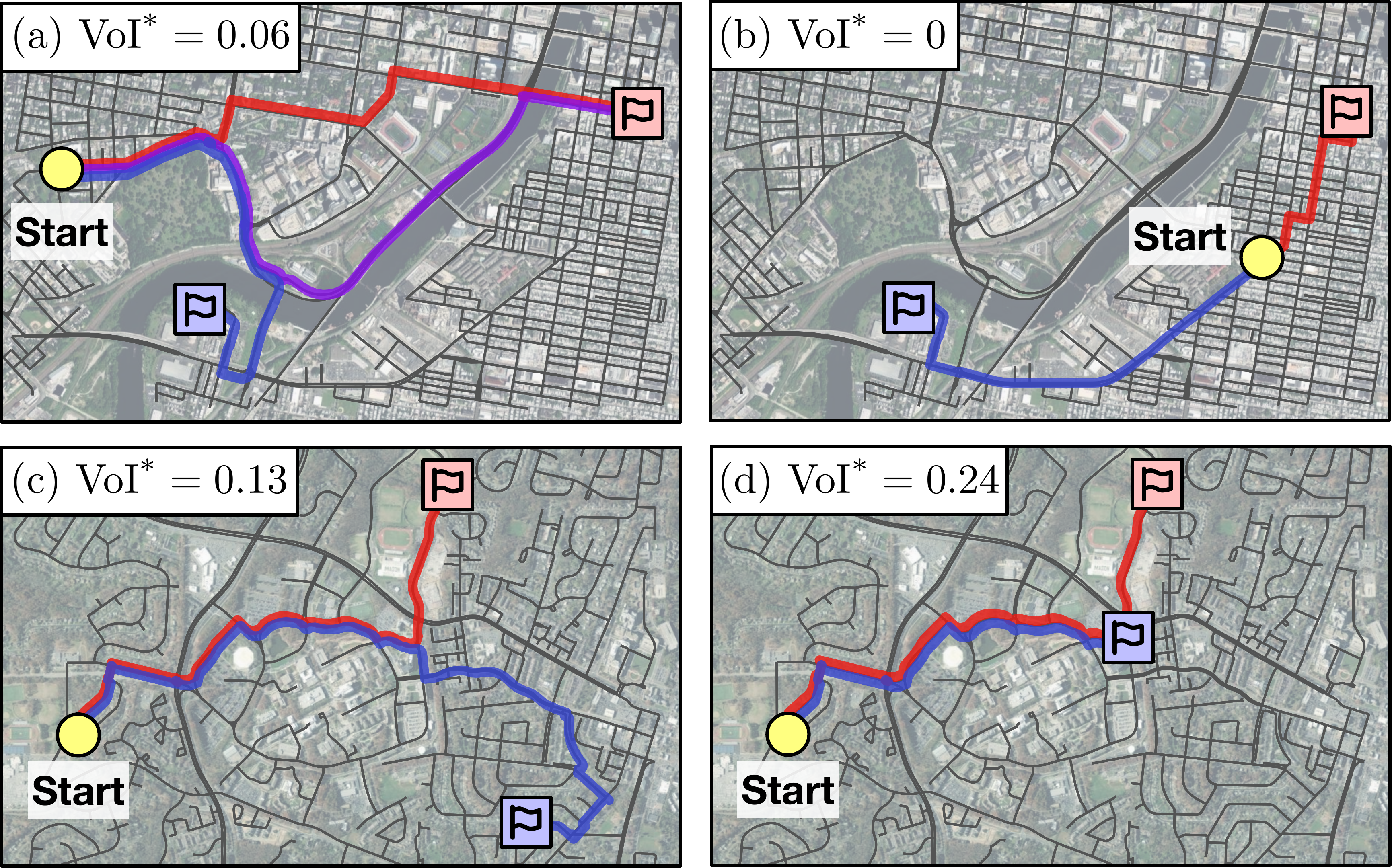} \caption{Equilibrium paths of the Attacker: (a)~all three phases exist along $\xi^2$ (blue path); (b)~only phase $\x{3}$ exists; (c)~only phases $\x{1}$ and $\x{3}$ exist; (d)~only phase $\x{1}$ exists.}
\vspace{-15pt}
    \label{fig. selling fig}
\end{figure}

\noindent
Figure~\ref{fig. selling fig} shows four examples of equilibrium paths and the corresponding $\text{VoI}^*$.
The prior is fixed at $b_0=[0.7,0.3]$ for all cases; however, $\text{VoI}^*$ varies based on the graph, location of goals, and the initial conditions.
The effect of the initial state is highlighted in Fig.~\ref{fig. selling fig}(a) and~\ref{fig. selling fig}(b). 
In~\ref{fig. selling fig}(a), the equilibrium \Mover{} maintains ambiguity by initially following $\xi^2$, resulting in a non-zero $\text{VoI}^*$.
In contrast, the optimal strategy in~\ref{fig. selling fig}(b) is a direct path to the true goal, suggesting no opportunity for deception from that state. 
Similarly, the influence of the goal locations are highlighted in~\ref{fig. selling fig}(c) and~\ref{fig. selling fig}(d).
In~\ref{fig. selling fig}(c) the presence of phase III is reflected in a lower $\text{VoI}^*$. 
In~\ref{fig. selling fig}(d), path $\xi^2$ fully overlaps with a portion of $\xi^1$, resulting in higher $\text{VoI}^*$.

\subsection{Baseline for Comparative Study}\label{num sec. sec B}
\noindent
We compare the equilibrium paths of the Attacker with two forms of deceptive motion: \textit{ambiguity}, where the agent obscures its true goal by simultaneously approaching both candidate goals; and \textit{exaggeration}, where the agent deliberately moves toward a fake goal to convince the \Eater{} that it is the intended target. The ambiguous paths ($\xi^{\theta,\text{am}}$) and exaggeration paths ($\xi^{\theta,\text{ex}}$) for each goal $g^\theta$ are generated using the one-sided optimization method from~\cite{Dragan2015DeceptiveRM}. 
In~\cite{Dragan2015DeceptiveRM}, the Defender assumes that the \Mover{} is acting optimally with respect to a presumed cost functional $C: \Xi \rightarrow \mathbb{R}_{\geq 0}$, defined over the space of trajectories $\Xi$. 
The \Eater{} then uses $C[\cdot]$ to form its belief $\mathbb{P}(g^i|\xi_{[0:t]})$ based on the observed path from $s_0$ to $s_t$:
 \begin{equation} \label{eq: belief generation anca dragan}
     \mathbb{P}(g^i|\xi_{[0:t]})=\frac{1}{Z}\frac{\text{exp}(-C[\xi_{[0:t]}]-V_{i}(s_t))}{\text{exp}(-V_{i}(s_0))}b^i_0,
 \end{equation}
 where $Z$ is a normalizer across the set of the candidate goals, $V_{i}(s)=\min_{\xi\in \Xi^{i}(s)}C[\xi]$, and $\Xi^{i}(s)$ is a set of paths from $s$ to $g^i$. 
 We use path length as the cost functional: $C[\xi_{[0:t]}]=\sum_{\tau=1}^{\tau=t}w(e^\xi_\tau)$, which gives $V_{i}(s)=d(s,g^i)$.

 The model in~\eqref{eq: belief generation anca dragan} is used to obtain exaggeration and ambiguous paths~\cite{Dragan2015DeceptiveRM}:
\begin{subequations}
 \begin{align}
     \xi^{\theta,\text{ex}} \in& \arg \min_{\xi \in \Xi^{\theta}(s_0)} \sum_{t=0}^{t=T^\xi} \mathbb{P}(g^\theta|\xi_{[0:t]}),  \\  
    \xi^{\theta,\text{am}}\in& \arg \min_{\xi \in \Xi^{\theta}(s_0)} \sum_{t=0}^{t=T^\xi} \lvert \mathbb{P}(g^\theta|\xi_{[0:t]})- \mathbb{P}(g^{-\theta} |\xi_{[0:t]} )\rvert.
\end{align}
\end{subequations}

To demonstrate the effectiveness of the equilibrium Defender $\sigma^*$, we compare it to an alternative strategy $\sigma'$ that uses~\eqref{eq: belief generation anca dragan} to generate a belief and acts greedily according to it (See Lemma~\ref{lem. greedy Eater}). When $b^1_t=0.5$, the alternative Defender deterministically allocates to $g^1$, since uniform belief makes any Defender's action rational. 

\subsection{Comparative study}\label{num sec. sec C}

\noindent
The Attacker is placed in $10 \times 10$ grid world environment with obstacles shown on Fig.~\ref{fig. anecdotal example paths}. The Attacker is allowed to move left, right, up and down, but not diagonally. 
The prior is $b_0=[0.6, 0.4]$. 

The primary Attacker mixes paths $\xi^{1,\x{1}}$ and $\xi^{1,\x{2}}$ with probabilities $\frac{1}{3}$ and $\frac{2}{3}$ respectively, while the secondary Attacker deterministically follows $\xi^{2}$. 
 \begin{figure}[h]\includegraphics[width = 0.48\textwidth]{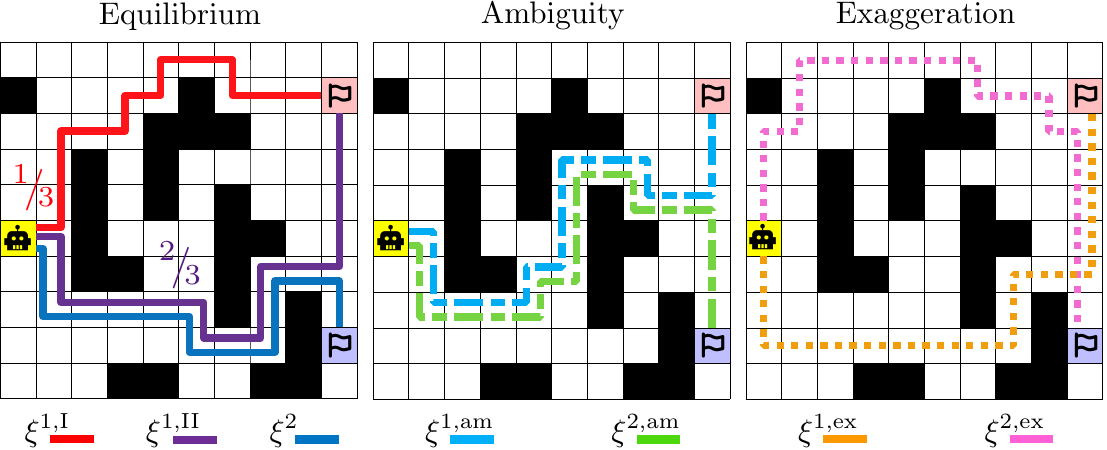}  
    \caption{Different deceptive behaviors of the Attacker: equilibrium, ambiguity and exaggeration paths. The red flag is $g^1$, the blue one is $g^2$, and the black squares are the obstacles.}
    \label{fig. anecdotal example paths}
\end{figure}

The corresponding accumulated expected allocation to $g^\theta$ under $\sigma^*$ over time is shown on Fig.~\ref{fig. anecdotal example plots}.
The left figure demonstrates that the primary Attacker cannot achieve any lower expected cost than $d(s_0,g^1)=15$, as was stated in Lemma~\ref{Lem: mover payoff in game A}. 
The right figure indicates the suboptimality of $\xi^{2,\text{am}}$ and $\xi^{2,\text{ex}}$.
\begin{figure}[h]
    \centering
    \includegraphics[width = 0.48\textwidth]{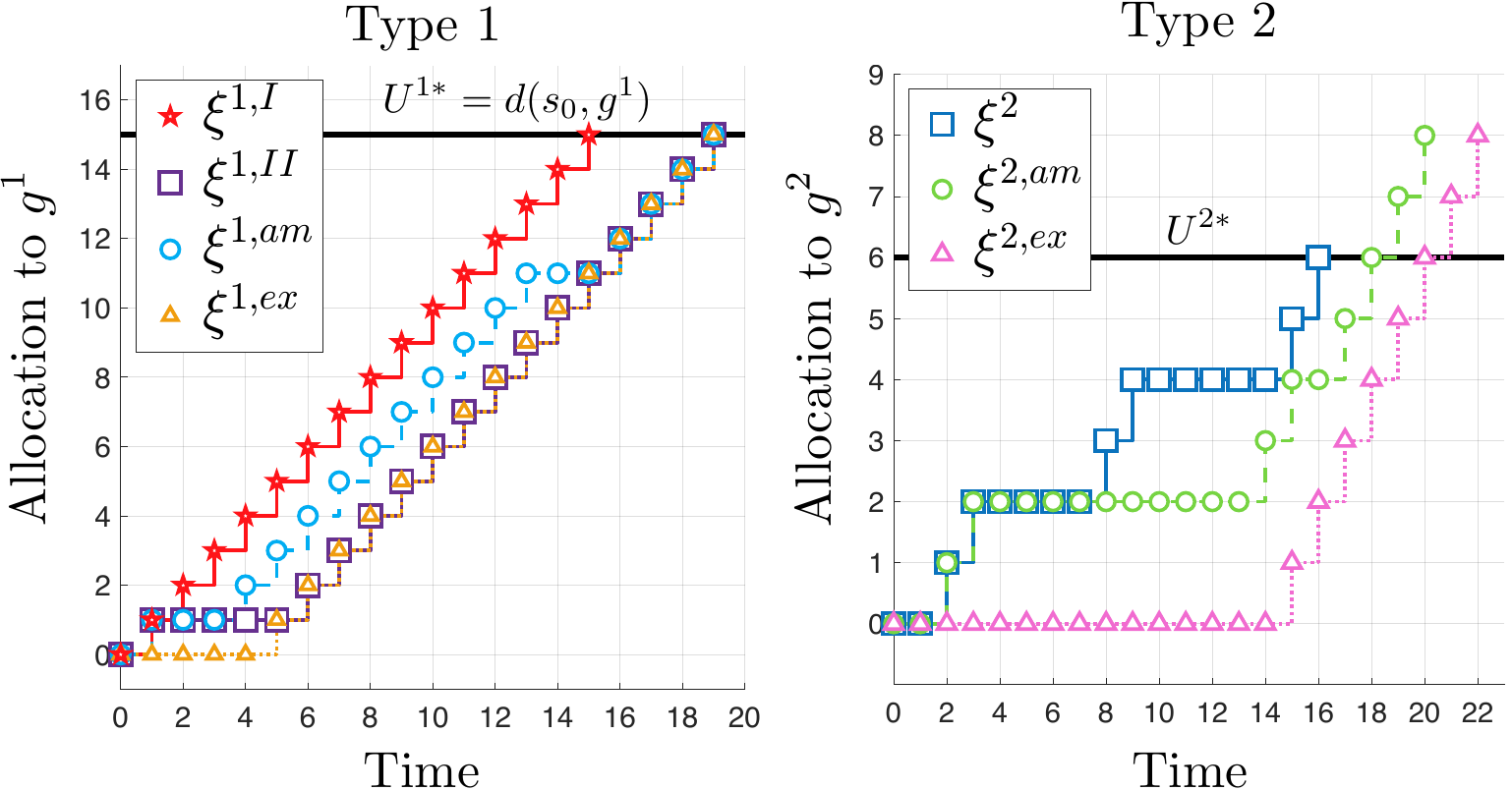}
    
    \caption{Corresponding accumulated expected allocation to $g^\theta$ over time.}
    \label{fig. anecdotal example plots}
\end{figure}
With the alternative Defender's strategy $\sigma'$, the expected cost for each Attacker type is $U^{1}(\sigma',\gamma^*)=\frac{1}{3}(15)+\frac{2}{3}(5)=8.33$ and $U^{2}(\sigma',\gamma^*)=15$. The overall expected payoff is $U(\sigma',\gamma^*)=11$, which is lower than equilibrium one $U^*(\sigma^*,\gamma^*)=11.4$.

\subsection{Statistical results}\label{num sec. sec D}
\noindent
We constructed $20\times20$ grid world graphs with obstacle densities varying from $0\%$ to $40\%$. 
 Here the obstacle density corresponds to the fraction of grid cells occupied by obstacles. 
All graphs share the same start and goal locations, as well as the prior of $b_0=[0.6,0.4]$. The obstacles are randomly generated in a way such that there is at least one path to each goal that does not lie through the other. 

\begin{figure}[t]
    \centering
    \includegraphics[width = 0.48\textwidth]{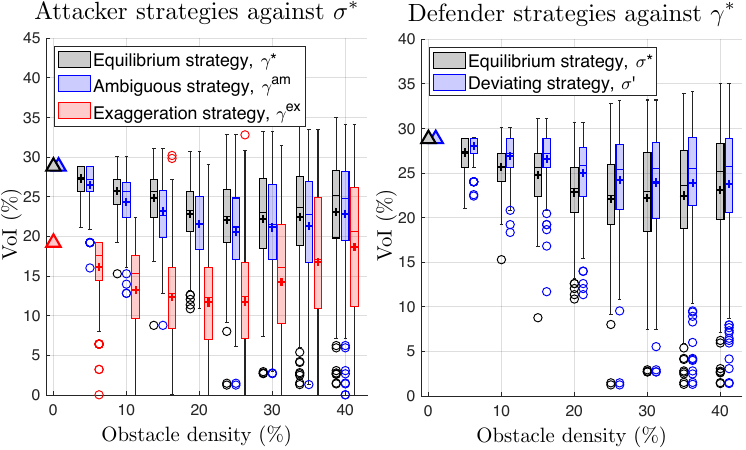}
    
    \caption{Deviation analysis in environments with different obstacle densities. The triangle sign indicates VoI in obstacle-free environment. 
    The plus/horizontal/circle sign indicates the mean/median/outlier.}
    \vspace{-15pt}
    \label{fig. VOI vs Density}
\end{figure}

Figure~\ref{fig. VOI vs Density} shows the statistical results of the equilibrium and deviating strategies. 
Each density has $200$ graph samples. 
The VoI against the equilibrium Defender is affected by the difference between $\bar U^2 (s_0)$ and $U^2 (\sigma^*,\gamma;s_0,b_0)$, because $\sigma^*$ ensures $U^1=\bar U^1(s_0)$ for any primary Attacker's strategy (see Lemma~\ref{Lem: mover payoff in game A}).
While both exaggeration and ambiguity strategies underperform relative to the equilibrium strategy, ambiguity results in a higher VoI, indicating it is the more suitable form of deception for the DPP game analyzed in this work.
The deviating Defender performs the same as the equilibrium one in an obstacle-free environment because the actions given by~$\sigma'$ are the same as those given by~$\sigma^*$.
Observe that the overall mean VoI decreases as obstacle density increases for both $\sigma^*$ and $\sigma'$, while the range of VoI increases.
This is because obstacles can induce either higher or lower VoI depending on how they are placed.
Higher VoI occurs when obstacles extend phases~\Romannum{1} and~\Romannum{2}. Conversely, VoI is lower when obstacles cause the paths to split earlier, resulting in shorter phases~\Romannum{1} and~\Romannum{2}, and a longer phase~\Romannum{3}.

\section{Conclusion}
\noindent
We present a Bayesian formulation of a DPP game to examine how an agent can regulate the information revealed through its motion. In our game, the \Mover{} must reach its designated target without revealing its true destination to the \Eater{}, who seeks to infer the goal and allocate defensive resources accordingly.
Recognizing the general difficulty of computing a PBNE, we propose a computationally efficient method that exploits the unique structure of our game to solve for the PBNE and find the equilibrium deceptive and counter-deceptive strategies that emerge in this setting.
Specifically, our approach starts with a construction of a memory-free \Eater{} strategy, which is then used to construct an \Mover{} strategy that not only optimally responds to this strategy but also induces beliefs that render the \Eater{}’s actions rational. As a result, the deceptive strategy for the \Mover{} involves a probabilistic mix between a goal-directed (shortest) path and a deceptive path that approaches a fake goal.
Future research directions include exploring additional game variations to assess the generalizability of our approach to a broader class of games.


\bibliographystyle{IEEEtran}
\bibliography{ref_aamas}

\end{document}